\theoremstyle{plain}
\newtheorem{theorem}{Theorem}[section]
\theoremstyle{definition}
\theoremstyle{remark}
\newcommand{\gnn}{~\textsc{SwaGen}}
\icmltitlerunning{Enhancing Swarms’ Durability to Threats via GNN-based Generative Modeling}
\begin{document}
\newcommand{\fref}[1]{Figure~\ref{#1}}
\newcommand{\tref}[1]{Table~\ref{#1}}

\twocolumn[
\icmltitle{Enhancing Swarms' Durability to Threats via Graph Signal Processing and GNN-based Generative Modeling}



\icmlsetsymbol{equal}{*}

\begin{icmlauthorlist}
\icmlauthor{Jonathan Karin}{equal,cs}
\icmlauthor{Zoe Piran}{equal,cs,cur}
\icmlauthor{Mor Nitzan}{cs,phy,med}
\end{icmlauthorlist}

\icmlaffiliation{cs}{School of Computer Science and Engineering, The Hebrew University of Jerusalem}
\icmlaffiliation{cur}{Present address: Department of Computer Science, Stanford University, Stanford, CA, USA. Research and Early Development, Genentech, Inc., South San Francisco, CA, USA.}
\icmlaffiliation{phy}{Racah Institute of Physics, The Hebrew University of Jerusalem, Jerusalem, Israel}
\icmlaffiliation{med}{Faculty of Medicine, The Hebrew University of Jerusalem, Jerusalem, Israel}


\icmlcorrespondingauthor{Mor Nitzan}{mor.nitzan@mail.huji.ac.il}

\icmlkeywords{Machine Learning, ICML}

\vskip 0.3in
]



\printAffiliationsAndNotice{\icmlEqualContribution} 

\begin{abstract}
Swarms, such as schools of fish or drone formations, are prevalent in both natural and engineered systems. While previous works have focused on the social interactions within swarms, the role of external perturbations--such as environmental changes, predators, or communication breakdowns--in affecting swarm stability is not fully understood. Our study addresses this gap by modeling swarms as graphs and applying graph signal processing techniques to analyze perturbations as signals on these graphs. By examining predation, we uncover a ``detectability-durability trade-off", demonstrating a tension between a swarm's ability to evade detection and its resilience to predation, once detected. 
We provide theoretical and empirical evidence for this trade-off, explicitly tying it to properties of the swarm's spatial configuration.  
Toward task-specific optimized swarms, we introduce \gnn, a graph neural network-based generative model. We apply \gnn~to resilient swarm generation by defining a task-specific loss function, optimizing the contradicting trade-off terms simultaneously.With this, \gnn~reveals novel spatial configurations, optimizing the trade-off at both ends. Applying the model can guide the design of robust artificial swarms and deepen our understanding of natural swarm dynamics. 
\end{abstract}
\section{Introduction}\label{sec:intro}

\begin{figure*}[h!] \label{fig:abstract}
	\centering
	\includegraphics[width=0.85\linewidth]{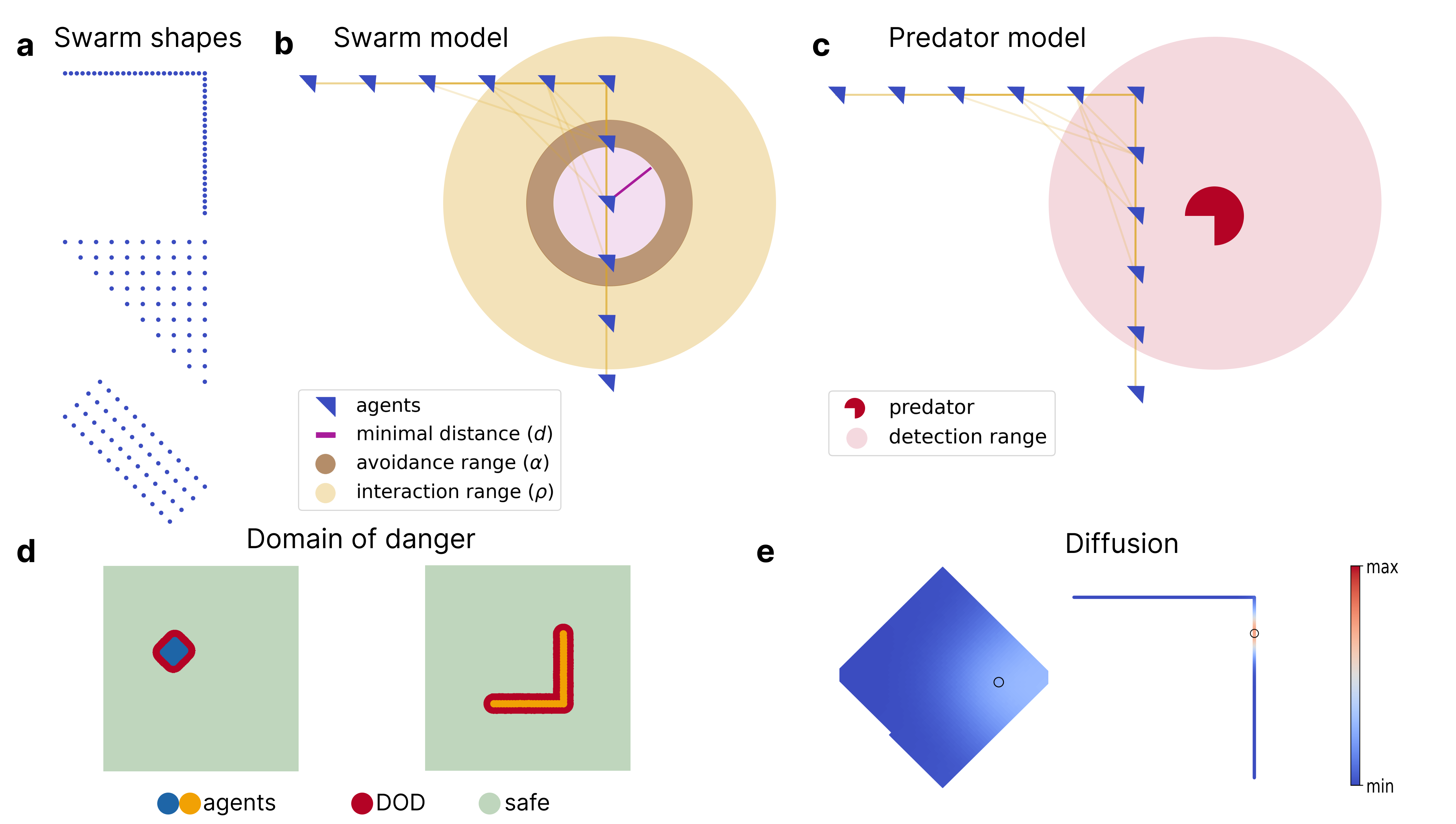}
	\caption{\textbf{An abstract visualization of the GSP approach used to model swarm behavior.} \textbf{a,} Initial swarming configuration used in our framework: v-formation (top), arrow (middle), rectangle (bottom). \textbf{b,} The swarming model (see~\ref{sec:swarm-model}), along with the introduced graph constructions. Agents are in blue triangles, and orange lines represent the induced swarm graph (edges added in accordance with the interaction range). Characteristic parameters shown for a representative agent; the avoidance range ($\alpha$; brown), the interaction range ($\rho$; yellow), and the minimum distance ($d$; pink). \textbf{c,} The predation model (see~\ref{sec:predator-model}). The predator, a red shape, and the detection range in light red. \textbf{d-e,} Properties of the rectangle (left) and v-formation (right) considering $N=2000$ agents; (d) the diffusion ability visualized using a signal propagated through heat filter (with $\tau=50$), and (e) the domain of danger (red).}
 \vspace{-3mm}
\end{figure*}
Collections of agents that move synchronously in close spatial proximity are omnipresent in biology and engineering. These collections, termed ``swarms'', are composed of agents that may be living organisms, such as birds or school of fish, or artificial entities, such as drones. Swarm behavior is affected by the collective decision-making process. These movement-related decisions can result from a ``democratic'' process, where the majority opinion prevails~\cite{strandburg2015shared}, or could be guided by a few informed individuals, with a specific moving direction tendency~\cite{couzin2005effective}.

Swarm-dynamics models often assume that the agent's decisions rely on its local neighborhood. Hence, in these models, location updates are often guided by averaging the dynamics within the local neighborhood~\cite{sridhar2021geometry,couzin2005effective, strandburg2015shared}.
For example, in the swarming model introduced by~\cite{couzin2005effective}, each agent aligns its movement with its neighbors while maintaining a minimum distance away from its closest neighbor (\fref{fig:abstract}), intended to avoid collisions. These models were shown to capture qualitative and quantitative measured features of collective swarm motion~\cite{couzin2005effective, sridhar2021geometry}, which indicates that local interactions are sufficient to describe natural collective behavior, allowing groups to navigate and forage effectively~\cite{couzin2005effective, couzin2003self}.

Previous works assessed potential advantages of different swarm spatial configurations, such as v-formations of birds during long-distance travel, which promote conservation of energy by utilizing the up-wash from neighboring birds (\fref{fig:abstract}a,~\citealp{cattivelli2011modeling}). 
In addition, the implication of the swarm's spatial configuration on its robustness to external perturbations has been explored;~\cite{mateo2017effect} evaluate the relationship between the number of neighbors interacting with an agent and the time between two consecutive predator catches, and~\cite{mateo2019optimal} investigate the influence of the network topology on the collective response of the swarm. 
However, an inclusive framework that links between a swarms' spatial configuration and observed collective behavior—allowing for task-specific modeling, analysis, and optimization—is lacking. 

We can model the spatial configuration of a swarm using a graph. In this graph, agents are represented by nodes, and edges indicate which agents are close enough to interact. Next, with graph signal processing (GSP), an emerging field that extends traditional signal processing techniques to data defined on irregular structures, represented as graphs~\cite{ortega2018graph}, we can incorporate external and internal signals encountered by the swarm, and its effective response to them. Specifically, GSP enables the analysis and manipulation of signals with complex dependencies, which make it suitable for understanding self-organizing systems~\cite{dong2020graph}. In such systems, local interactions between components lead to the emergence of global behavior, often observable in biological, social, and technological networks. In the context of swarm modeling, GSP approaches were introduced to find behaviorally anomalous agents~\cite{schultz2021detecting} and to analyze the spectral properties of basic swarm topologies~\cite{schultz2021analyzing}. 

Graph neural networks (GNNs) extend GSP by introducing learnable filters. They operate on graph-structured data, using message-passing mechanisms to aggregate information from local neighborhoods~\cite{zhou2020graph}. Applications of GNNs in the field of swarming dynamics include learning decentralized controllers to optimize collective behavior~\cite{tolstaya2020learning} and designing communication strategies for improved coordination~\cite{chen2023spatial}.

In this study, we propose a GSP-based method to link the collective behavior of swarms to their spatial configuration. Our framework aims to quantify the effect of external cues as a function of the geometric structure of the swarm and generate task-specific swarm configuration. We demonstrate the application of this framework to study the response of swarms to external perturbations in the form of predation--exposing an inherent ``detectability-durability trade-off". The trade-off is studied both theoretically and empirically, presenting the relationship between a swarm's resilience and spatial configuration parameters. Using the modeling framework, we suggest \gnn, a task-specific GNN-based generative model for optimized swarm configurations. Applying \gnn~to the task of durable swarm design we identify novel optimized configurations, improving swarm's resilience.

\section{Preliminaries} \label{sec:preliminaries}

\subsection{Swarming model}\label{sec:swarm-model}
Couzin et al.~\cite{couzin2005effective} presented a swarming model driven by social interactions, where the agents' movement direction is modified based on its local neighborhood. In this setting, a groups consist of $N$ individuals, 
where each individual $i$, is associated with a position vector $\mathbf{c}_i(t)$, a direction vector $\mathbf{d}_i(t)$, and a constant speed value $s_i$. The initial locations are typically initialized to follow a well-defined structure, considering common classes observed in nature, e.g., v-formation, arrow, or rectangle (\fref{fig:abstract}a, ~\citealp{portugal2020bird,vicsek2012collective}). The location of each individual $i$ is then updated in time, 
$\mathbf{c}_i(t)= \mathbf{d}_i(t)* s_i$.
To avoid collisions, each agent $i$ attempts to maintain a minimum \textit{avoidance range}, $\alpha$ (\fref{fig:abstract}b) from others $j$ by turning away from neighbors within that range.
Additionally, each agent is attracted to, and aligned, relative to neighbors located outside the minimum distance but within the \textit{interaction range}, $\rho$ (\fref{fig:abstract}b). Together the updated direction of agent $i$ is given by:
\begin{align*}
    \mathbf{d}_i(t + \Delta t) = \sum_{j \in \{j : \alpha \leq |\mathbf{c}_j(t) - \mathbf{c}_i(t)| < \rho\}} &\frac{\mathbf{c}_j(t) - \mathbf{c}_i(t)}{|\mathbf{c}_j(t) - \mathbf{c}_i(t)|}  \\
   +  &\frac{\mathbf{d}_j(t) - \mathbf{d}_i(t)}{|\mathbf{d}_j(t) - \mathbf{d}_i(t)|} \ .
\end{align*}

As suggested in~\cite{couzin2005effective}, to enable control over the movement of the swarm or incorporate external information, a fraction $p$ of agents are provided with a preferred direction  $\hat{\mathbf{g}}$ (a unit vector in the direction to e.g. a resource or migration route). As opposed to this informed subset of agents, naive individuals have no directional preference. Additionally, none of the agents are aware of which group members are informed.
Informed individuals incorporate  $\hat{\mathbf{g}}$ into their preferred direction choice, modifying $\hat{\mathbf{d}}_i(t + \Delta t)$ (the unit vector in the desired direction) to $\mathbf{d}_i'(t + \Delta t)$:
\begin{align*}
\mathbf{d}_i'(t + \Delta t) = \hat{\mathbf{d}}_i(t + \Delta t) + \omega \cdot \hat{\mathbf{g}_i} \ ,
\end{align*}
where $\omega$ dictates how much weight is given to the external information provided by $\hat{\mathbf{g}_i}$. If $\omega = 0$, 
the preferred direction $\hat{\mathbf{g}_i}$ of informed individual $i$ has no effect on its direction vector (turning these effectively into naive individuals).
As $\omega$ increases, individuals balance and then overweigh ($\omega > 1$) their preferred direction relative to social interactions.

\subsection{Predator model}\label{sec:predator-model} 
Predation is an example of an external perturbation to swarms' collective behavior, and can be modeled using the theory of marginal predation~\cite{hamilton1971geometry,morrell2011spatial}. The assumption here is that, at each time step, predators approach and attack the closest potential prey located within a fixed and finite distance $\delta$ (\textit{detection range}, \fref{fig:abstract}c,  ~\citealp{james2004geometry}). 
To simplify, we will use the same value for both detection and interaction range ($\delta=\rho$,  see~\ref{sec:swarm-model}).
The predator moves in a constant, random, direction, until an agent (prey) enters its detection range--marking a detection event. Once a detection occurs, the predator changes its direction to advance towards the agent. An encounter between an agent and a predator is considered a predation event.
The direction vector of the predator is updated using:
\begin{align*}
    \mathbf{d}_{\text{p}}(t + \Delta t) &= 
    \begin{cases} 
        \frac{\mathbf{c}_i(t) - \mathbf{c}_{\text{p}}(t)}{|\mathbf{c}_i(t) - \mathbf{c}_{\text{p}}(t)|} &  |\mathbf{c}_i(t) - \mathbf{c}_{\text{p}}(t)| < \rho \ , \\ 
        \text{random unit vector} & \text{otherwise,}
    \end{cases}
\end{align*}
where $\mathbf{c}_{\text{p}}(t)$ is the position vector of the predator at time $t$ and $\mathbf{c}_i(t)$ is the position vector of agent $i$ at time $t$.
Here $i$ is the index of the closest prey agent within the detection range.

As suggested by~\cite{olson2016evolution}, to allow for predation, the speed of a predator must be larger than the speed of a non-predator agent.
Additionally, each agent $i$ aims to avoid a predator when it enters the agent's interaction range $\rho$, by moving in the opposite direction of the predator~\cite{chen2014minimal}:
\begin{align*}
    \mathbf{d}_i(t + \Delta t) &=  - \frac{\mathbf{c}_{\text{p}}(t) - \mathbf{c}_i(t)}{|\mathbf{c}_{\text{p}}(t) - \mathbf{c}_i(t)|} \ ,
\end{align*}

\subsection{Graph signal processing}\label{sec:gsp}
Graph Signal Processing (GSP) is a framework that extends classical signal processing concepts to signals defined on irregular structures represented by graphs~\cite{ortega2018graph}. A graph $ G = (V, E) $ is defined by a set of vertices $( V )$ and a set of edges $( E )$ connecting pairs of vertices. The adjacency matrix $( \mathbf{A})$ of a graph is an $ N \times N $ matrix where $ N $ is the number of vertices. The element $ A_{ij} $ is non-zero if there is an edge between vertex $ i $ and vertex $ j $.

The graph Laplacian $ \mathbf{L} $ is a key operator in GSP~\cite{ortega2018graph}, defined as $ \mathbf{L} = \mathbf{D} - \mathbf{A} $, where $ \mathbf{D} $, termed the degree matrix, is a diagonal matrix with $D_{ii}$ the degree of vertex $i$. 
The Laplacian can be used to capture various aspects about the structure and features of its associated graph. One such feature is the total variation $\mathrm{TV (\mathbf{v})};$ of a graph signal 
$\mathbf{v} \in \mathbb{R}^N$, measuring the smoothness of $\mathbf{v}$ over the graph (with lower values indicating a smoother signal):
\begin{align*}
\mathrm{TV}(\mathbf{v}) = \mathbf{v}^\top \mathbf{L} \mathbf{v} = \sum_{i,j} A_{ij} {(v_i - v_j)}^{2}
\ . 
\end{align*}
The Graph Fourier Transform (GFT) of signal $\mathbf{v}$ is described as follows: 
\begin{align*}
    \hat{\mathbf{v}} = \mathbf{U}^\top \mathbf{v} \ ,
\end{align*}
where $\mathbf{U}$ is the matrix whose rows are the eigenvectors of the graph Laplacian $\mathbf{L}$.
In GSP, graph filters process signals by amplifying specific frequency components~\cite{shuman2016vertex}. Low-pass filters, such as the heat filter, smooth signals by preserving low-frequency components and diffusing the signal across the graph~\cite{shuman2016vertex}. Such filters reduce the total variation of the signal and are essential for tasks like denoising and smoothing~\cite{ortega2018graph}.


\subsubsection{Diffusion on graphs} \label{sec:diff_graph}

Diffusion processes on graphs are used to model the spread of information or substances through the vertices of a graph~\cite{hammond2013graph}. 
The diffusion equation on a graph, describing the evolution of the signal over time, can be written as:
\begin{align*}
     \frac{d\mathbf{h}\left(\tau \right)}{d\tau} = -\mathbf{L}\mathbf{h}\left(\tau \right) \ ,
 \end{align*}
 where $\mathbf{h}\left(\tau \right)$ is a signal on the graph at time $\tau$. 
The solution is found by exponentiating the Laplacian eigenspectrum, and by that define the heat kernel $\mathbf{K}\left(\tau \right) \in \mathbb{R}^{n \times n}$:

 \begin{align*}
     \mathbf{K}\left(\tau \right) = \sum_{k}e^{-\tau \lambda_{k}} \mathbf{u}_{k}\mathbf{u}^{\top}_{k}\ = \mathbf{U} e^{-\tau \Lambda} \mathbf{U}^\top
 \end{align*}
The solution of the diffusion equation is then given by:
 \begin{align*}
     \mathbf{h}\left(\tau \right)  = \mathbf{K}\left(\tau \right)\mathbf{h}(0) = \mathbf{U} e^{-\tau \Lambda} \mathbf{U}^\top\mathbf{h}(0)
 \end{align*}
 Where $\Lambda$ is a diagonal matrix, with $\Lambda_{kk} = \lambda_k$, $\lambda_k$ is the $k$'th eigenvalue of $\mathbf{L}$, and $\mathbf{h}(0)$ is the initial condition.
 
An entry in the kernel denotes the signal propagation between two nodes in the graph, namely: 
 \begin{align*}
     \mathbf{K}_{i,j}(\tau) = \sum_{k}e^{-\tau \lambda_{k}} \mathbf{u}_{k}(i)\mathbf{u}_{k}(j) 
 \end{align*}

When the graph is embedded in a manifold that is locally euclidean, the signal propagation can be approximated by the Gaussian~\cite{bai2004heat}: 
\begin{align*} 
    \mathbf{K}_{i,j}\left(\tau \right) =\left(4\pi \tau\right)^{-n/2} e^{- \frac{d\left(i,j \right)^{2}}{4 \tau^2}} \ ,
\end{align*}

where $d\left(i,j \right)$ is the distance between nodes $i$ and $j$ on the manifold and $n$ is the number of nodes in the graph.
This Gaussian approximation directly links diffusion time $\tau$ to distance $d\left(i,j \right)$. For the signal to be non-negligible (i.e. pass between nodes) at a large distance, the time $\tau$ must also be correspondingly large, specifically scaling such that $\tau \sim d\left(i,j \right)$.

Lastly, given a graph impulse signal $\delta_i$ ('1' at vertex $i$ and '0' elsewhere), the impulse response for the heat kernel at vertex $j$ can be approximated by:
\begin{align*}
    (\mathbf{K} * \delta_i)_j 
    &= \sum_{k} \mathbf{K}_{j,k}(\tau) \, \delta_i(k) \\
    &= \mathbf{K}_{j,i}(\tau) \\
    &= \left(4\pi \tau\right)^{-n/2} \exp\left(- \frac{d(i,j)^2}{4\tau^2} \right)
\end{align*}

\subsection{Graph Neural Networks}
Graph Neural Networks (GNNs) are a class of neural networks that extend graph signal processing by introducing learnable filters that learn local structures~\cite{zhou2020graph}. GNNs iteratively aggregate and transform features from neighboring nodes. A key property of GNNs is permutation invariance: the learned embeddings remain unchanged under the reordering of nodes~\cite{wu2020comprehensive}. Therefore, GNNs are a suitable candidate model for learning local motifs for swarms.
For the GNN layer we used GraphSAGE~\cite{hamilton2017inductive} as it concatenates self-node features with aggregated neighborhood features, preserving node-specific information, formally: 
\[
\mathbf{h}_v^{(0)} = \mathbf{x}_v, \quad \forall v \in V
\]
\[
\mathbf{h}_{\mathcal{N}(v)}^{(k)} = \text{AGGREGATE}_k\big(\{\mathbf{h}_u^{(k-1)} : u \in \mathcal{N}(v)\}\big), 
\]
\[
\mathbf{h}_v^{(k)} = \sigma\Big(\mathbf{W}_k \cdot \text{CONCAT}\big(\mathbf{h}_v^{(k-1)}, \mathbf{h}_{\mathcal{N}(v)}^{(k)}\big)\Big)
\]
\[
\mathbf{h}_v^{(k)} = \frac{\mathbf{h}_v^{(k)}}{\|\mathbf{h}_v^{(k)}\|_2}, \quad \forall v \in V
\]
\[
\mathbf{z}_v = \mathbf{h}_v^{(K)}, \quad \forall v \in V
\]
$\mathbf{h}_{v}^{(k)}$: embedding of node $v$ at layer  $k$.  $\mathbf{x}_v$: input features of node $v$
$\mathcal{N}(v)$: set of neighbors of node $v$, $\mathbf{h}_{\mathcal{N}(v)}^{(k)}$: aggregated embedding from $\mathcal{N}(v)$ at layer $k$.
$\sigma$: Non-linear activation.
$\mathbf{W}_k$: learnable weight matrix at layer $k$. $\mathbf{z}_v$: output embedding of node $v$.

\section{Linking spatial configuration and task response}\label{sec:analytical_model}

\begin{figure}[htb!]
 \centering
\includegraphics[width=0.9\linewidth]{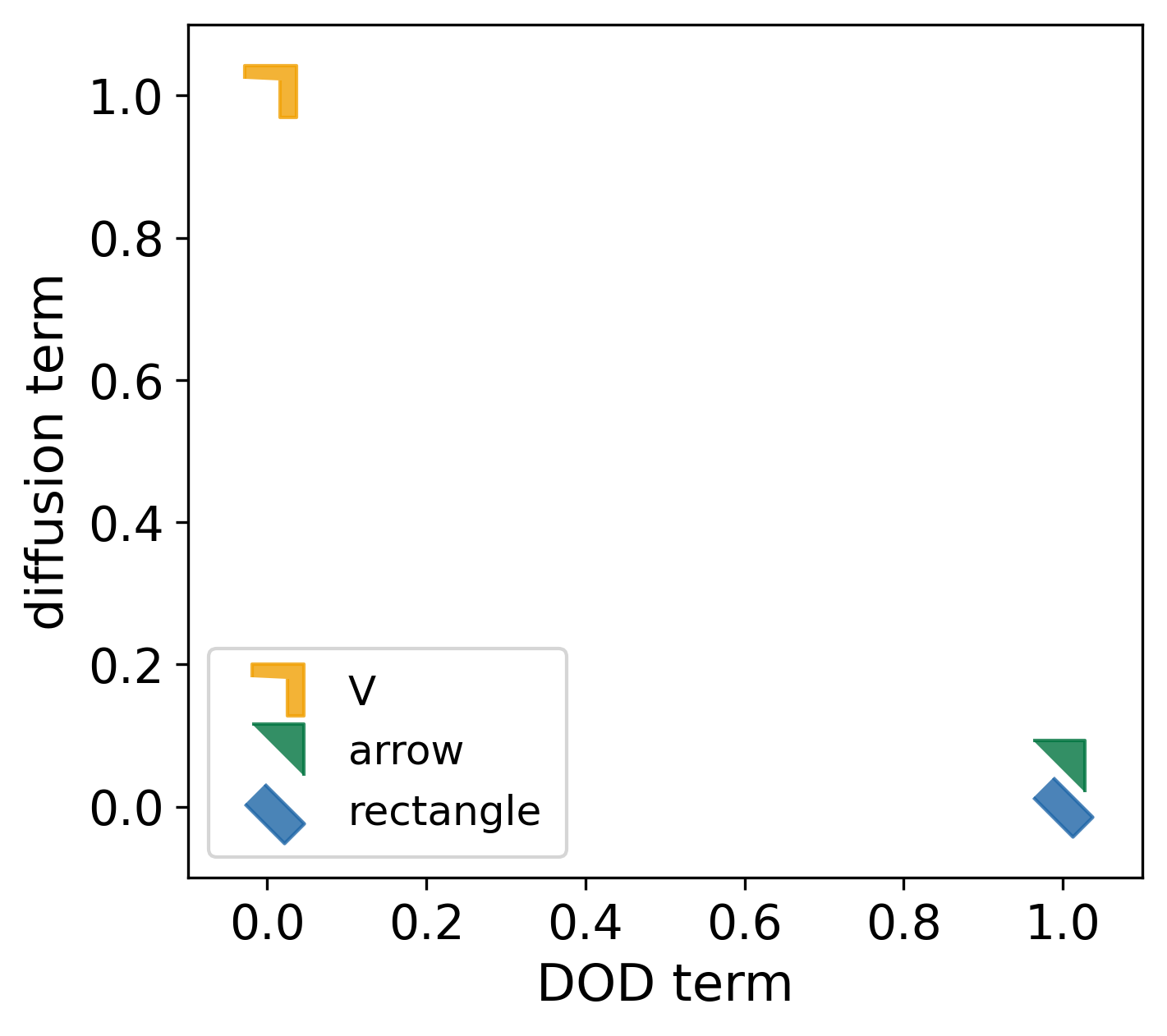}
  \caption{\textbf{The DOD-diffusion plane.} The tension between the domain of danger (DOD) and diffusion ability in common spatial configurations of swarms. Evaluation of the DOD (x-axis) and diffusion term (y-axis) normalized to the range of $[0,1]$. Evaluation for each spatial configuration is performed using $N=1000$ agents and $d=5$ (the distance between neighboring agents, see~\ref{app:simulation-details}).
To evaluate the diffusion we computed the mean values of the $CV$ for every possible source agent each configuration and considered heat kernels with $\tau=[50,100,150]$.}
  \label{fig:shapes}
  \vspace{-2mm}
\end{figure}

We first represent the swarm as a graph which captures the swarm's spatial configuration and provides a convenient modeling framework for internal and external signals.
Formally, the swarm is modeled as a graph \( G = (V, E) \), where each agent is represented by a node \( v_i \in V \), and an edge \( (v_i, v_j) \in E \) exists if the Euclidean distance between agents \( i \) and \( j \), denoted by \( \| \mathbf{x}_i - \mathbf{x}_j \| \), is less than the interaction range \( \rho \)(\fref{fig:abstract}b):
\[
|| \mathbf{x}_i - \mathbf{x}_j \| < \rho \implies (v_i, v_j) \in E \ .
\]

Using this framework we analyze the impact of the geometric properties of a swarm graph on its durability to a predator's attack (assuming the predator moves sequentially from one prey to the next,~\citealp{morrell2011spatial}). Note that this modeling can be easily extended to additional perturbations, e.g. air currents along the route of bird flocks flight~\cite{van2020environmental, calovi2015collective}, or adversarial attacks of artificial swarms, disrupting information passing between agents~\cite{primiero2018swarm}.
To explore this configuration-durability dependence we focus on common classes of spatial configurations of swarms: v-formation, arrow, and rectangle (\citealp{portugal2020bird,vicsek2012collective}, \fref{fig:abstract}d,e, \fref{fig:shapes visualization}a,b,c, see~\ref{app:initial-shape}). 

The predation process can be represented as two consecutive stages, which are modeled as signals in the graph (see~\ref{sec:predator-model}). First, the predator is at a \textit{detection phase}, posing a risk to the entire swarm. The detection phase terminates in a detection event, which occurs once the predator is within distance $\rho$ (the detection range) from an agent. Hence, the union of disks of radius $\rho$, centered at agents in the swarm, is considered the swarm's \textit{domain of danger} (DOD;~\citealp{james2004geometry}, \fref{fig:abstract}d). We formally prove that the magnitude of the DOD (provided as a fraction of the total area of the grid) dictates the probability of detection of the swarm (see Appendix~\ref{sec:detect}). 

Following detection the predator starts its \textit{predation phase}, advancing from one agent to another within the swarm. This phase is modeled as a diffusible signal on the graph, that is, a signal that spatially propagates from a source node, the first detected agent. Thus, we can measure a predator's effectiveness in hunting the swarm by evaluating the diffusion capability in the swarm's graph representation. 
Diffusion in a graph can be modeled by defining a source signal, mimicking a local perturbation at location $i$, and considering the signal's propagation along the graph using a heat filter~\cite{defferrard2017pygsp}. The coefficient of variation ($CV = \sigma/\mu$, with $\sigma$ the standard deviation and $\mu$ is the mean) of the signal after passing through the heat filter quantifies the ability of the signal to diffuse along the graph, with lower values indicating more efficient diffusion of the signal throughout the graph (\fref{fig:abstract}e). We provide the theoretical foundation and empirical evidence for the correlation between the duration of the predation phase and the ability of a signal to diffuse in the graph, by tying it to shortest path lengths along the graph using the Gaussian heat kernel approximation (see~\ref{sec:diff_graph}, Appendix~\ref{sec:durability}).

The above implies that a swarm whose associated graph structure has a smaller DOD is harder to detect, whereas graphs which are harder to diffuse in, will be more resilient to a predator's attack once detected, making them less likely to face extinction.

These properties present an inherent contradiction: a smaller DOD implies a more compact spatial configuration that enhances diffusion, while a larger DOD suggests a less compact configuration that impedes it (~\fref{fig:shapes},~\tref{table:mean_cv}, Appendix~\ref{sec:tradeoff}). 
This tension defines the basis for the ``detectability-durability trade-off", presented in the following section.

\subsection{The detectability-durability trade-off}\label{sec:res-predation}

We turn to address the predation durability task using the derived framework--we simulate swarms, with two predators whose initial positions are selected randomly across the simulated grid, and evaluate the population size at the end of the simulation (\fref{fig:predator}a, see~\ref{sec:predator-model}, \ref{app:simulation-details}).

The mean percentage of surviving agents is relatively stable between spatial configurations (v-formation: $66.5\pm34.8\%$, arrow: $64.4\pm47.8\%$, rectangle: $67.6\pm46.8\%$). However, the distribution reveals a dependence on the shape--exposing characteristics associated with each spatial configuration.

The detectability-durability plane reveals the expected trade-off between the percentage of simulations in which the swarm avoided detection and the percentage in which it avoided extinction (\fref{fig:predator}b).
This analysis confirms that detection is associated with the swarm's DOD, while durability relates to its diffusion properties. A comparison between the two trade-off planes
(\fref{fig:shapes}, \fref{fig:predator}b) demonstrates that that the model and simulation results recover a consistent relative ordering of the spatial configurations.
For example, we observe that the least compact shape (maximal DOD term), v-formation, is easiest to detect (\%58.0 of the simulations compared to \%36.0 and \%32.4 for the arrow and rectangle, respectively); however, its spread across the grid, making it harder to diffuse in it, minimizes its extinction (\%5.6 of the simulations compared to \%34.0 and \%32.0 for the arrow and rectangle, respectively). 
This result is stable across different simulation settings including scenarios with stochastic predation (predation occurs at probability $p$), stochastic interaction range, and stochastic speed (both sampled in each iteration from a Gaussian distribution centered at parameter values used for the deterministic model, ~\fref{fig:app_more_models}).

\begin{figure}[htb!]
 \centering
\includegraphics[width=0.85\linewidth]{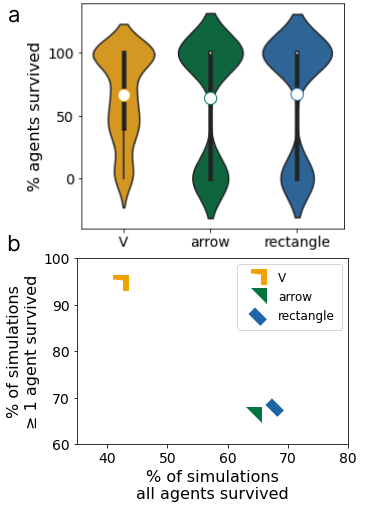}
  \caption{\textbf{Swarm's susceptibility to predation as a function of spatial configurations.} Results based on $250$ simulations for each configuration (v-formation, arrow, rectangle). \textbf{a,} Violin plots of the number of living agents at the end of the simulation for different initial shape structures: v-formation, an arrow, and a rectangle. 
  \textbf{b,} The detectability-durability plane, the percentage of simulations in which the swarm avoided extinction (y-axis) vs. the percentage in which it avoided detection (x-axis).
  }
  \label{fig:predator}
\vspace{-2mm}
\end{figure}

\section{Task-specific optimal configurations} \label{sec:generative}

The modeling framework sets the ground for generating task-specific optimal swarm configurations. A GNN-based architecture suits our setting as it uses message passing between neighboring nodes (agents), similarly to the manner in which interactions are defined in the swarming model (see~\ref{sec:swarm-model}). In addition, the permutation invariance property of GNNs ensures that representations are independent of the ordering of agents within the swarm--a desirable property to induce local structural motifs. Lastly, GNNs allow varying the swarm size, generalizing across the number of agents.

Hence, we propose \gnn, a GNN-based generative model that optimizes the initial structure of the swarm.
The architecture of \gnn is task-agnostic, capturing the spatial configuration of the swarm, however, the loss of \gnn~needs to be modified with respect to the task (\fref{fig:gen_architecture}). The loss contains two basic components, acting as regularizes and ensuring validity of the spatial configurations (e.g. ensure that swarm's agents are located within the predefined grid limits). To achieve task-specific optimization, these core components must then be supplemented with terms dedicated to the specific task.  

In the case of predation attack, we focus on the task of maximizing the number of surviving agents.
Relying on the established connection between a swarm's spatial configuration and resilience, the detectability-durability trade-off, we define the spatial configuration task over the DOD-diffusion trade-off.  Using such a generative approach uncovers a diverse set of solutions that represent a variety of resilient swarm configurations, that may not be intuitive or easily designed manually.

\begin{figure*}[htb!]
  \begin{center}
    \includegraphics[width=0.8\linewidth]{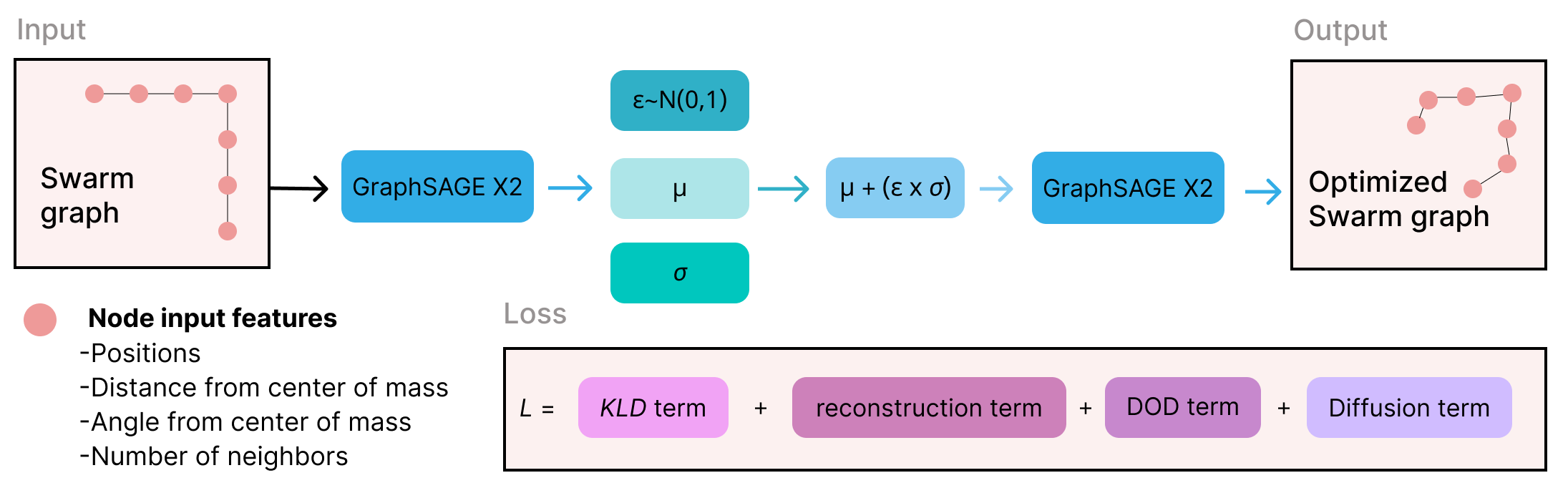}
  \end{center}
  \caption{\textbf{Overview of the \gnn~model architecture.}  The \gnn~model input is a swarm represented as a graph with $N$ neighbors and the following per-node features: position, distance and angle from the swarm's center of mass. The graph is processed through two GraphSAGE~\cite{hamilton2017inductive} layers, followed by a sampling layer, and two GraphSAGE layers. The loss function consists of four terms:  KL divergence, reconstruction, DOD, and diffusion. The output is an optimized swarm graph.}
  \label{fig:gen_architecture}
\end{figure*}

\subsection{Generative model description}\label{sec:gen-model-des}
As visualized in~\fref{fig:gen_architecture}, \gnn~ consists of four graph convolution layers (GraphSAGE,~\cite{hamilton2017inductive}), interleaved with an intermediate sampling layer, used to model the latent space. All layers, apart from the first, use ReLU activation. For the first layer, a Sine function is used to improve the learnability of high-frequency features~\cite{sitzmann2020implicit}. The model's input is the swarm positions ($C \in \mathbb{R}^{N \times d}$, where $N$ is the number of agents and $d$ is the space dimension), the number of neighbors of each agent, and the distance and angle of each agent relative to the swarm’s center of mass (see~\ref{app:model-input}). As a graph input for the GNN we use the swarm graph $G = (V, E)$ (see~\ref{app:graph-construction}). The output is an optimized arrangement of the swarm ($\Tilde{C} \in \mathbb{R}^{N \times d}$).
The loss function consists of four terms whose motivation is to approximate in a differentiable way the principles uncovered in the prior analysis (see~\ref{sec:analytical_model}).
\begin{enumerate}
    \item \textbf{Kullback-Leibler Divergence (KLD) term ($\mathcal{L}_{\text{KLD}}$)}, ensures that the learned latent space approximates a predefined Gaussian distribution~\cite{pinheiro2021variational}:
    \[
    \mathcal{L}_{\text{KLD}} = -\frac{1}{2} \sum \left( 1 + \log(\sigma^2) - \mu^2 - \sigma^2) \right),
    \]
    where $\mu$ is the mean and $\sigma^2$ is the variance of the latent distribution.
    \item \textbf{Reconstruction term ($\mathcal{L}_{\text{rec}}$)}, the mean squared error between the initial agents' position and the optimized agents' position:
\[
\mathcal{L}_{\text{rec}} = \text{MSE}(C, \tilde{C})
\]
\item \textbf{Domain of Danger (DOD) term ($\mathcal{L}_{\text{DOD}}$)}, the area surrounding the swarm, defined by the union of disks surrounding all agents (\fref{fig:abstract}d, see~\ref{sec:analytical_model}). We aim to minimize the DOD and thereby reduce the predator's effective detection range. For smoothness and differentiability, we use the following approximation:
\[
\mathcal{L}_{\text{DOD}} = \\
  \sum_{x, y} \sigma( \sum_{i=1}^{N} \sigma(-k (\sqrt{(x - \tilde{C}_{i,x})^2 + (y - \tilde{C}_{i,y})^2} -r)))\\
\]
where $N$ is the number of agents, sum is taken from all discrete points on the grid $(x, y)$. $C_{i,x}$ and $C_{i,y}$ denote the $x$- and $y$-coordinates of agent $i$, and $r$ is the interaction range. $\sigma$ is the sigmoid function $\sigma(x) = {1}/{1 + e^{-x}}$, and the constant $k$ controls its slope. 

\item \textbf{Diffusion term ($\mathcal{L}_{\text{diff}}$)}, aims to limit the predator's ability to spread within the swarm and predate all agents. For that, we aim to minimize the swarm's graph diffusion ability (\fref{fig:abstract}e).
We construct this term as follows:
\begin{enumerate}
    \item \textit{Pairwise Distance Matrix:}  
    \[
    Q_{ij} = \|\tilde{C}_i - \tilde{C}_j\|_2,
    \]
    for $\tilde{C_i}$, $\tilde{C_j}$, the positions of agents $i$, $j$, respectively.
    \item \textit{Adjacency Matrix:}  
    \[
    A_{ij} = \sigma\left(-\frac{1}{r} \left(Q_{ij} - r\right)\right),
    \]
    for $r$ the interaction range, $\sigma(x) = {1}/({1 + e^{-x}})$, and $\forall i, A_{ii}=0$.
    \item \textit{Diagonal Degree Matrix:}  
    \[
    D_{ii} = \sum_{j} A_{ij}\ .
    \]
    \item \textit{Normalized Graph Laplacian:}  
    \[
    \tilde{L} = D^{-\frac{1}{2}} L D^{-\frac{1}{2}}\ .
    \]
    Where $L = D - A$.
    \item \textit{Heat Diffusion Filter:}  
    \[
    H = e^{-\tau \tilde{L}}\ ,
    \]
    where $\tau$ is the diffusion time parameter.
    \item \textit{Diffusion Evaluation:} 
    Obtain the smoothed signal for each node 
    , using $\delta_i$ which is 1 at node $i$ and 0 elsewhere):
    \[
    \forall i,\ {h}_i = H \delta_i\ , 
    \]  
    and evaluate the average coefficient of variation ($CV$) of $h$ across all $N$ nodes:  
    \[
    \mathcal{L}_{\text{diff}} = -\frac{1}{N} \sum_{i=1}^{N} \frac{\text{std}({h}_i)}{\text{mean}({h}_i) + \epsilon}\ ,
    \]
    where a small $\epsilon$ is used to avoid division by zero. Since we seek to maximize $CV$, we minimize its negative value. 
\end{enumerate}
\end{enumerate}
The effective loss function $L_{\textsc{SwaGen}}$ combines the four loss components, each weighted by an independent coefficient (all hyperparameters, see~\ref{app:loss_weighting}): 
\begin{align*}
    L_{\textrm{SwaGen}} =  \delta \mathcal{L}_{\textrm{KLD}} + \phi \mathcal{L}_{\textrm{rec}} + \beta \mathcal{L}_{\textrm{DOD}} + \gamma \mathcal{L}_{\textrm{diff}} \ .
\end{align*}
Of note, a fifth loss term of repulsion can be added to prevent agents from being mapped to the same point, modeling the avoidance range $\alpha$ (see~\ref{sec:swarm-model},~\ref{app:repulsion}).

\subsection{Optimizing the DOD-diffusion plane} \label{sec:gen-model-loss}

\begin{figure}[htb!]
  \begin{center}
    \includegraphics[width=0.75\linewidth]{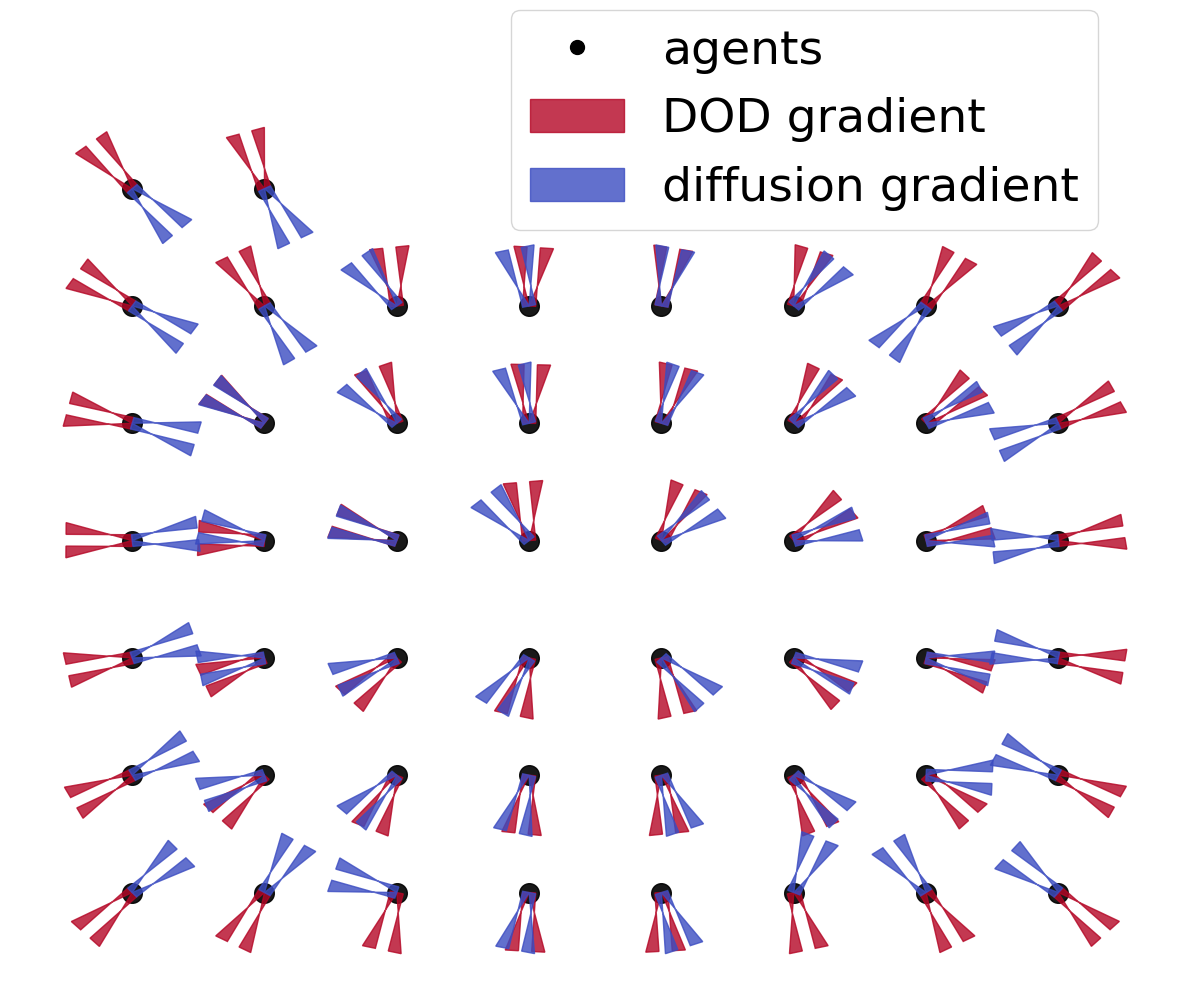}
  \end{center}
  \caption{\textbf{The models' gradients depict the DOD-diffusion trade-off}  Scatter plot of the swarm with the direction of the diffusion term gradient (blue) and the DOD term (red) for each agent. The gradients are negatively correlated without completely canceling, with Pearson correlation of $-0.82$.}
  \label{fig:gen_gradients}
\vspace{-2mm}
\end{figure}

As the DOD and diffusion are two contradictory properties of the swarm, we ought to validate that simultaneous optimization is plausible. Comparing the magnitude and direction of the terms' gradient, we could verify that they induce non-vanishing gradients, namely allowing for optimization (\fref{fig:gen_gradients}). Across all evaluated shapes, we find that gradients are negatively correlated but do not fully cancel each other out (Pearson correlation coefficient between the terms' gradients for each shape; v-formation: $-0.64$, arrow: $-0.78$, rectangle: $-0.82$). 

We next examine the two extreme points within the DOD-diffusion trade-off by shifting their relative weight (see Appendix~\fref{fig:extreme_loss}); In the case of high DOD loss, all agents are mapped to nearly the same point (the minimal DOD). In contrast, with high diffusion loss, agents are mapped far apart from one another, resulting in a graph where each agent is disconnected from the rest, preventing diffusion ability. 

\subsection{Training procedure}
\label{sec:gen-model-training}
The model was trained to predict optimized agent positions from the initial configurations using stochastic gradient descent with the Adam optimizer ($\text{learning rate}=0.01$) for $25$ epochs.
As a training set we generated $1000$ swarm configurations on a 2D grid (of size $1000 \times 1000$ pixels, see~\ref{app:graph-construction}). The number of agents, $N$, is sampled from a predefined range $\left[N_{\text{min}}, N_{\text{max}}\right]$, set to $N_{\text{min}}=40$ and $N_{\text{max}}=50$ in the presented results. The positions of the agents are sampled independently (see~\ref{app:data-gen},~\ref{app:more_training_details})

\section{Experiments} \label{sec:experiments}

\begin{figure}[htb!]
  \begin{center}
    \includegraphics[width=0.9\linewidth]{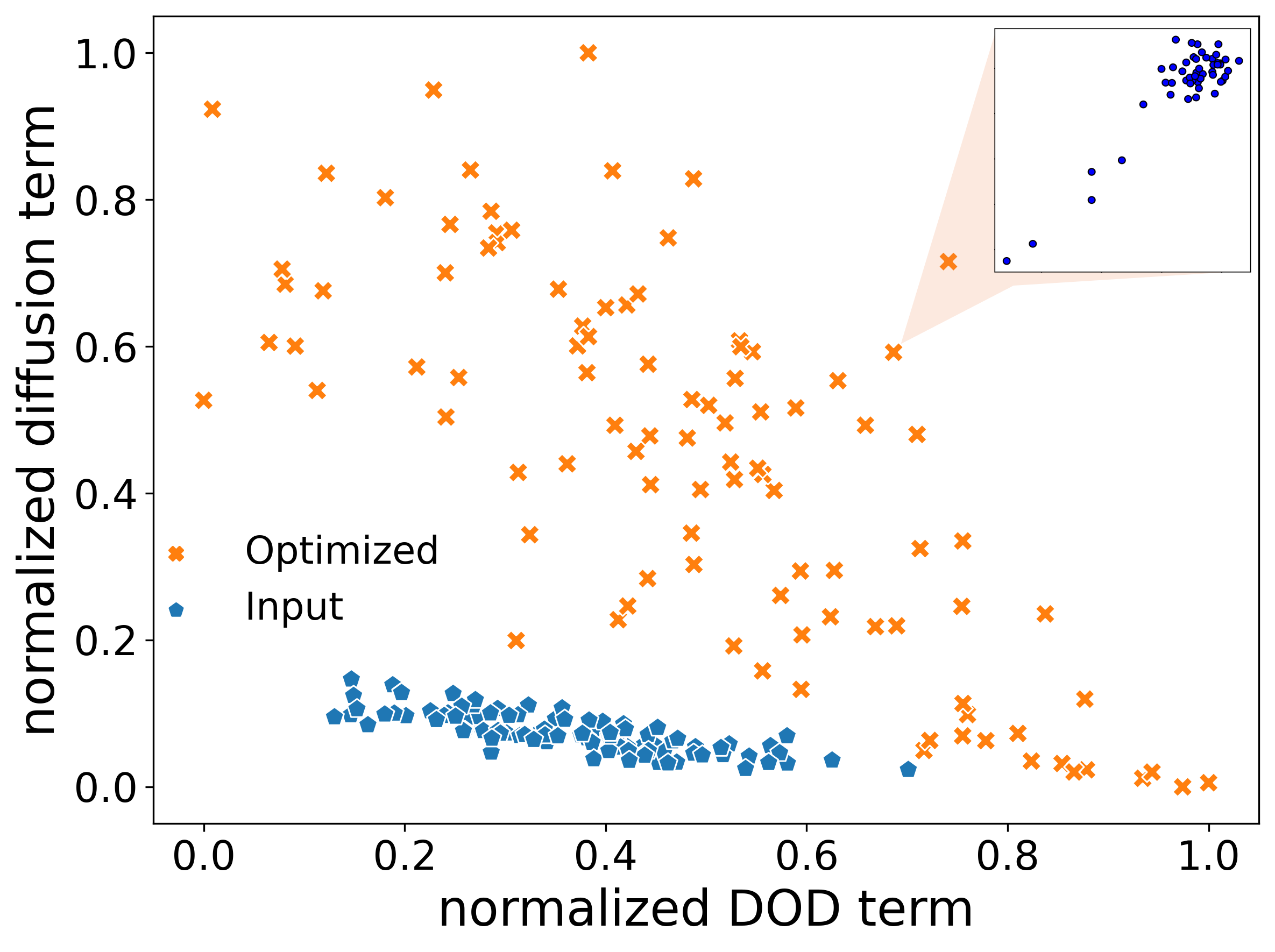}
  \end{center}
\caption{\textbf{Visualization of the trade-off considering random and optimized swarm spatial configurations.}  
The normalized diffusion ($y$-axis) compared to the normalized DOD ($x$-axis). For comparison, values are normalized to the range $[0,1]$ such that the top-right corner $(1, 1)$ is the optimal value (minimal DOD and diffusion).  Blue points represent the random swarm initialization, and orange crosses are the optimized swarm configurations.}
  \label{fig:optimization_analysis}
\vspace{-4mm}
\end{figure}

\subsection{\gnn~recovers optimal spatial configurations}\label{sec:optimization-analysis}

We first consider the DOD and diffusion terms, of random (model input) and optimized (model output) configurations (considering $100$ random swarm initializations, see~\ref{sec:gen-model-des}, ~\ref{app:data-gen}). 
Terms are normalized to the range of $[0,1]$ such that $1$ is the optimal value of each--minimal DOD and minimal diffusion.
Visualizing these results over a 2D plane (DOD-diffusion), we observe that optimized points (representing spatial configurations) improve the random initializations, according to both trade-off terms, closer to the optimal top right corner (\fref{fig:optimization_analysis}, ~\fref{fig:scatter_shapes_loss}, Table~\ref{tab:model_loss}).

\vspace{-4mm}
\begin{table}[h]
\caption{\textbf{Swarm’s loss terms improvement following optimization}. The mean normalized DOD and Diffusion terms for different spatial configurations reported for $250$ independent configurations.}
\vskip 0.15in
\begin{center}
\begin{small}
\begin{sc}
\begin{tabular}{lcccc}
\toprule
Shape/Mean term \% & DOD & Diffusion   \\
\midrule
Input shape & 0.364 & 0.074  \\
Optimized shape & 0.496 & 0.446  \\
\bottomrule
\end{tabular}
\end{sc}
\end{small}
\end{center}
\label{tab:model_loss}
 \vskip -0.1in
\end{table}

\subsection{The ``kite" motif: a novel durable configuration}
Analyzing the geometric properties of the optimized swarm configurations, we identified a common ``kite'' motif--a block of agents and a curved line of agents (\fref{fig:optimization_analysis}).

To systematically assess the kite motif and compare it to alternative formations 
we defined a theoretical model of a kite (\fref{fig:shapes visualization}d, see~\ref{app:kite}),  
and validated it is a good representation of the optimized configurations (\tref{tab:gw}). The validation was performed using the L2 Gromov-Wasserstein (GW) distance~\cite{memoli2011gromov} between optimized configurations and the kite along with the common spatial configurations (see~\ref{sec:analytical_model}, ~\ref{app:gw}).

\begin{table}[h]
\caption{\textbf{Validation that \gnn~optimized output is represented by the kite model.} The mean GW distance between optimized spatial configurations and a set of spatial configurations--kite, V-formation, Arrow, and Rectangle. Results are reported for $100$ independent random configurations optimized with \gnn.}
\label{tab:gw}
\vskip 0.15in
\begin{center}
\begin{small}
\begin{sc}
\begin{tabular}{lcccc}
\toprule
Configurations &    GW  \\
\midrule
\textbf{Kite} & \textbf{0.0259}   \\
V-formation & 0.0946  \\
Arrow & 0.0612   \\
Rectangle & 0.0911  \\
\bottomrule
\end{tabular}
\end{sc}
\end{small}
\end{center}
\vskip -0.1in
\end{table}

Evaluation of the durability of the kite to predation, using the same predation modeling approach previously described (see~\ref{sec:res-predation}), revealed that it improves the observed trade-off between detection and survival (\fref{fig:scatter_shapes_tradeoff}). The kite is closest to the optimal values $(100\%, 100\%)$--namely it is hardest to detect, and improves the overall survival rate of the swarm. Importantly, it also presents an improvement in the mean number of surviving agents (\tref{tab:shapes_survive}).

At last, evaluation of the analytical 2D DOD-diffusion plane (see ~\ref{sec:analytical_model}) revealed a similar trend--the kite was found above the diagonal of the plane, resembling the trade-off front ($(0,1)$-$(1,0)$), therefore presenting an improvement of it (\fref{fig:scatter_shapes_loss}). 

\begin{figure}[htb!]
  \begin{center}
    \includegraphics[width=0.9\linewidth]{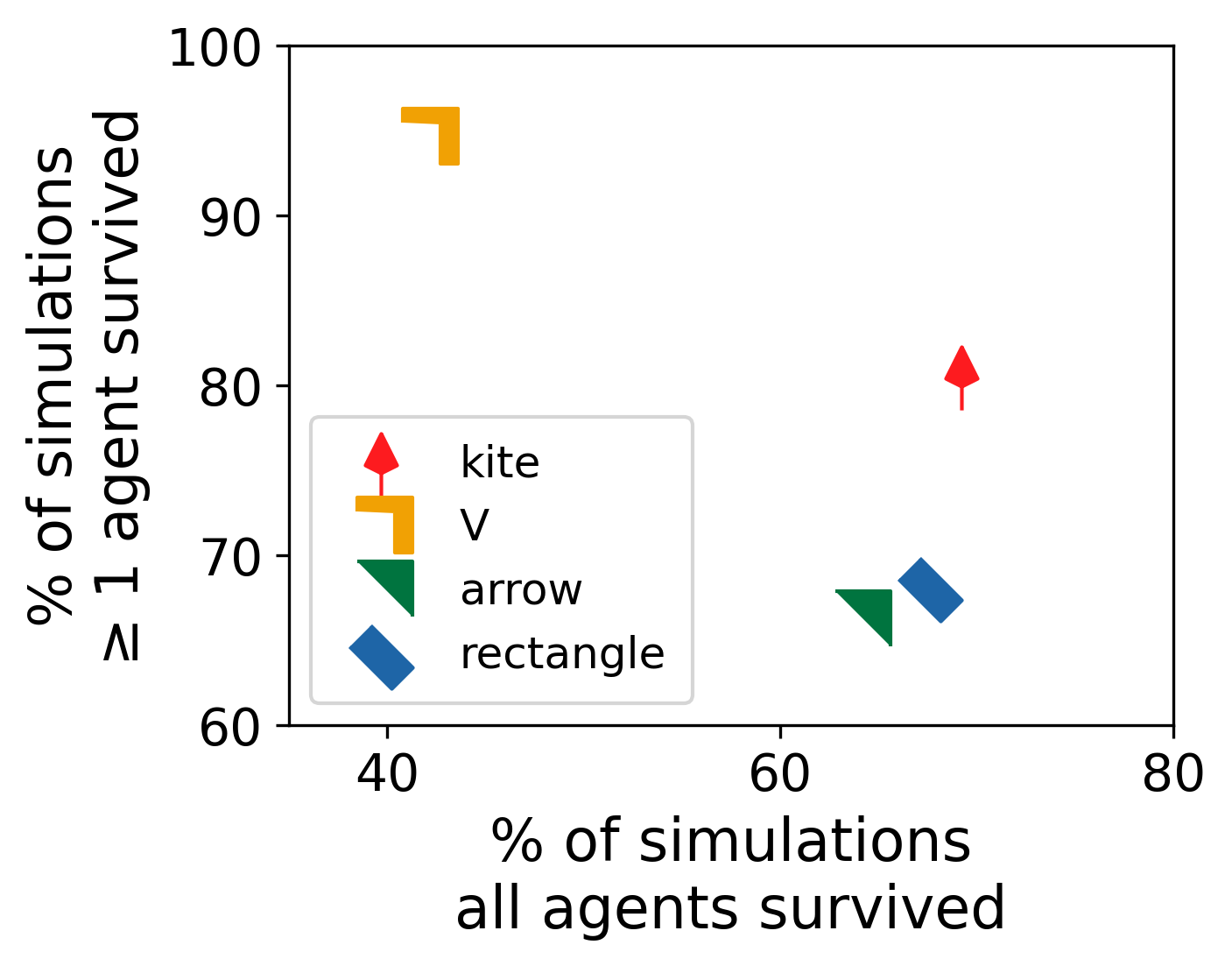}
  \end{center}
  \caption{\textbf{The kite improves the swarms' durability trade-off.}
Evaluation of the percentage of simulations in which the swarm avoided detection (none of the agents were predated; x-axis) compared to those that avoided extinction (at least one agent survived; y-axis) for the different spatial configurations (see~\ref{sec:predator-model}). Results are reported for $250$ independent simulations.}
\label{fig:scatter_shapes_tradeoff}
\vspace{-4mm}
\end{figure}

\begin{table}[h]
\caption{\textbf{Agents' survival rate is increased by the kite} The mean percentage of surviving across model simulations (Results provided for $250$ independent simulations)}
\label{tab:shapes_survive}
\vskip 0.15in
\begin{center}
\begin{small}
\begin{sc}
\begin{tabular}{lcccc}
\toprule
Shape &    Percentage of surviving agents \\
\midrule
\bf{Kite} & \bf{71.14\%}   \\
V-formation & 66.50\%  \\
Arrow & 64.42\%   \\
Rectangle & 67.61\%  \\
\bottomrule
\end{tabular}
\end{sc}
\end{small}
\end{center}
\vskip -2mm
\end{table}

\section{Discussion}\label{sec:discussion}

In this work, we used GSP theory to develop an analytical approach for studying the impact of a swarm's spatial configuration on its collective behavior. A key application of this approach was modeling and understanding swarm responses to external perturbations, such as predation. This analysis revealed a fundamental detectability-durability trade-off: configurations that render a swarm harder to detect may also make it more vulnerable once an attack commences, and vice-versa.

This inherent trade-off can be defined as an optimization task, which we addressed by introducing \gnn, a GNN-based generative framework. The \gnn~loss function was specifically designed to mediate this trade-off, incorporating terms that capture contradictory properties of the swarm's configuration. \gnn~generated novel swarm configurations, most notably the "kite" structure, which demonstrated superior performance with respect to this trade-off. These optimized configurations concurrently reduced the swarm's detectability and enhanced its survival rate post-detection.
 
 We believe that \gnn, as a tool for designing durable swarms, along with the elucidation of the detectability-durability trade-off and the supporting theoretical insights, could be used in future applications to guide the engineering of artificial swarms—such as drone formations or robotic teams—that are demonstrably more resilient to diverse classes of perturbations. Furthermore, our findings may shed light on the underlying mechanisms that drive the formation of  spatial configurations observed in natural swarms, from schools of fish to flocks of birds, as they navigate complex environments and predation risks.

Finally, we acknowledge the limitations of this study. Providing a complete quantification of the landscape of optimized spatial structures generated by \gnn~remains a complex endeavor, within this, our evaluations, alongside the consistent emergence of the kite motif, provide strong support for the utility of the optimized landscape. 
Additionally, our current work focuses on static initial spatial configurations. Future research could extend to dynamic graph environments, enabling real-time learning and adaptation of durable structures, mimicking the responsive nature of many natural and engineered swarms.


\section*{Code}
The code is publicly available at: 
\url{https://github.com/nitzanlab/SwaGen}

\section{Acknowledgements}
We thank Roy Friedman and Nitzan lab members for the help and feedback.


\bibliography{refrence}
\bibliographystyle{plain}

\clearpage
\appendix

\renewcommand{\thetable}{A\arabic{table}}
\renewcommand{\thefigure}{A\arabic{figure}}

\onecolumn
\section{Appendix}

\subsection{The analytical model}

\subsubsection{Unit disk graph for swarm modeling}\label{app:graph-construction}
We model the swarm as an (unweighted and undirected) unit disk graph, where agents (vertices) are connected by edges if disks around them intersect. We set the radius of these disks to be equal to the interaction (sensing) range $\rho$ as proposed by~\cite{mohamed2021graph}. 

Formally, let $G = (V, E)$ denote the graph where  $V$ is the set of agents and $E$ represents the set of edges between them. An edge $e_{ij} \in E $ exists between agents $i$ and $j$ if the Euclidean distance between them $d(i, j) \leq \rho$.

\subsubsection{Initial swarm spatial configurations for comparative analysis}\label{app:initial-shape}

The spatial configuration of the swarm plays a central role in dictating the swarm's dynamics, and as we will show, affects its durability to external perturbations. To this end, we consider spatial configurations that have been previously studied in this regard, the v-formation, a common structure in bird flocks~\cite{portugal2020bird}, an arrow configuration~\cite{vicsek2012collective, hoang2018angle}, and a rectangular configuration of agents~\cite{vicsek2012collective} (\fref{fig:abstract}a). 

For simplicity, the v-formation is set to a $90^\circ$ angle and the arrow formation is structured as a right triangle, both are oriented in the preferred direction of the informed agents. Similarly, the rectangle is oriented such that its longer side is perpendicular to the preferred direction. In addition, as a "null" spatial configuration, we included a random structure, positioning agents randomly within a square, matching the length of its side to the length of the v-structure edges. Across all initial configurations, all agents are located such that there is a minimal distance between closest neighbors, $d$.

\subsubsection{Simulation implementation details}\label{app:simulation-details}
Based on the swarm model (see~\ref{sec:swarm-model}) and the predator model (see~\ref{sec:predator-model}) we defined the following simulation to study the swarm dynamics: the initial swarm configuration (see~\ref{app:initial-shape}) is set at the bottom-left corner of a grid with periodic boundary conditions. The simulation then progresses for $T$ time point. At each time point we update the location and direction of movement of all agents. In addition, in case a predator is included in the simulation, the agents' status (living/non-living) is updated (\tref{tab:simulation_attributes}).

The simulation models a group of agents interacting within a 2D bounded environment (1000$\times$1000 points) with periodic boundary conditions.  In each iteration, the simulation synchronously updates the position and velocity vectors of each agent.
The simulation parameters are reported in~\tref{tab:simulation_attributes}.

\begin{table}[h!]
\centering
\begin{tabular}{|p{4cm}|p{7cm}|p{4cm}|}
\hline
\textbf{Attribute}          & \textbf{Description}                                                                        & \textbf{Value/Range}                                  \\ \hline
\texttt{Position}            & Position of the agent, $0 \leq x, y \leq 1,000$                                             & Variable                                              \\ \hline
\texttt{Direction}           & Direction vector, normalized such that $x + y = 1$                                          & Variable                                              \\ \hline
\texttt{Live}                & Status of agent (whether predated)                                                          & Variable (Boolean)                                    \\ \hline
\texttt{Informed}            & Whether the agent has a preferred direction                                                 & Constant (Boolean)                                    \\ \hline
\texttt{Predator}            & Whether the agent is a predator                                                             & Constant (Boolean)                                    \\ \hline
\texttt{Speed}               & Speed at which agents move                                                                  & 2                                                     \\ \hline
\texttt{Interaction\_range}  & Range within which agents interact with each other                                          & $\rho = 50$                                           \\ \hline
\texttt{Weighting\_term}     & Weighting for how much an informed agent follows its preferred direction                    & $\omega = 0.5$                                        \\ \hline
\texttt{Informed\_fraction}  & Fraction of informed agents                                                                 & 0.2                                                   \\ \hline
\texttt{Avoidance weighting} & Weighting of the avoidance term                                                             & 1                                                     \\ \hline
\texttt{Attraction weighting}& Weighting of the attraction term (preserving the initial shape)                             & 0.1                                                   \\ \hline
\texttt{Direction alignment weighting} & Weighting of the alignment term                                                  & 1                                                     \\ \hline
\texttt{Noise\_std}          & Standard deviation of Gaussian noise added to the preferred direction                       & Variable                                              \\ \hline
\texttt{Detection range}     & Maximum distance a predator can detect prey (equal to $\rho$)                               & 50                                                    \\ \hline
\texttt{Killing range}       & Maximum distance a predator can eliminate prey                                              & 5                                                     \\ \hline
\texttt{Predator speed}      & Speed of the predator                                                                       & 3                                                     \\ 
\hline
\texttt{Minimum distance}      & Initial minimum distance between agents                                                                       & $d=10$ ($\%20$ of $\rho$)                                                     \\ \hline
\end{tabular}
\caption{Simulation agent and global attributes}
\label{tab:simulation_attributes}
\end{table}

\begin{table}[t]
\caption{Mean coefficient of variation following diffusion by heat filter. Reported are the mean $CV$ values for different spatial configurations across a range of $\tau$ values.}
\label{table:mean_cv}
\vskip 0.15in
\begin{center}
\begin{small}
\begin{sc}
\begin{tabular}{lcccc}
\toprule
$\tau$  & V-formation & Arrow & Rectangle \\
\midrule
50  & 3.17 & 0.78 & 0.66  \\
100 & 2.63 & 0.43 & 0.30  \\
150 & 2.36 & 0.26 & 0.14  \\

\bottomrule
\end{tabular}
\end{sc}
\end{small}
\end{center}
\end{table}

\clearpage

\subsection{The \gnn~model}
\subsubsection{Repulsion loss term}\label{app:repulsion}
The repulsion loss term is designed to prevent agents to be mapped into the same point (modeling the avoidance range-$\alpha$ which is described in~\ref{sec:swarm-model}). It penalizes outputs of the model where any pair of agents is closer than the avoidance range.

\[
\mathcal{L}_{\text{repulsion}} = \frac{1}{N^2} \sum_{i=1}^{N} \sum_{j=1, j \neq i}^{N} \max(0, \alpha - Q_{ij})
\]

\begin{itemize}
    \item $N $: The number of agents.
    \item $Q_{ij} $: The Euclidean distance between agents $i $ and $j $.
    \item $\alpha $: The avoidance range.
\end{itemize}

\subsubsection{Model input}\label{app:model-input}
The model input consists of four features.

\textbf{Swarm positions:} The positions of the agents are represented by a matrix 
    \[
    C \in \mathbb{R}^{N \times d},
    \]
    where $N$ is the number of agents, and $d$ is the dimension of the space (here we used only $d = 2$ for a 2D plane). 

\textbf{Number of neighbors:} For each agent $i$, the number of neighbors is given by
    \[
    k_i = \sum_{j \in V \setminus \{i\}} \#(d_{ij} \leq r),
    \]
    where $d_{ij}$ is the Euclidean distance between agents $i$ and $j$, $\rho$ is the interaction range, and $\mathbb{\#}(\cdot)$ is indicator function.~\cite{mateo2017effect} showed that the number of interacting neighbors could influence the swarm's response to perturbations.

\textbf{Distance from the center of mass:} Let the center of mass of the swarm be 
    \[
    \mathbf{c} = \frac{1}{N} \sum_{i=1}^N \mathbf{c}_i,
    \]
    where $\mathbf{c}_i \in \mathbb{R}^d$ is the position of agent $i$. The distance of agent $i$ from the center of mass is
    \[
    d_i = \|\mathbf{c}_i - \mathbf{c}\|_2.
    \]

\textbf{Angle from the center of mass:} The angle of agent $i$ relative to the center of mass is
    \[
    \theta_i = \arctan\left(\frac{\mathbf{y}_i - \mathbf{y}}{\mathbf{x}_i - \mathbf{x}}\right),
    \]
where $\mathbf{x}_i$ and $\mathbf{y}_i$ are the $x$ and $y$ coordinates of agent $i$, and $\mathbf{x}$, $\mathbf{y}$ are the corresponding coordinates of the center of mass.
We used the radius and the angle from the center of mass to allow the neural network to learn and generate motifs that are not dependent on a specific location on the grid.

\subsubsection{Input generation}\label{app:data-gen}
We used a simple sampling procedure to create random swarms localized in different locations along the 2D grid. It takes the following parameters:
\begin{itemize}
    \item $N$: Number of agents.
    \item  $\mu$,  $\sigma^2$: Mean and variance for the Gaussian distribution of the center of positions.
    \item  $\mu_r$,  $\sigma_r^2$: Mean and variance for the Gaussian distribution of position range.
\end{itemize}

The process is as follows:

1. The mean position values $\mu_x $ and $\mu_y $ are sampled from normal distributions centered at $\mu $ with standard deviation $\sqrt{\sigma^2} $.
2. The range values $\sigma_x $ and $\sigma_y $ are sampled from normal distributions centered at $\mu_r $ with standard deviation $\sqrt{\sigma_r^2} $.
3. The positions of each agent are sampled from a uniform distribution within the ranges defined by the mean and range values:
\[
x_i \sim \mathcal{U}\left(\mu_x - \frac{\sigma_x}{2}, \mu_x + \frac{\sigma_x}{2}\right),
\quad y_i \sim \mathcal{U}\left(\mu_y - \frac{\sigma_y}{2}, \mu_y + \frac{\sigma_y}{2}\right).
\]
4. The positions are clipped to ensure they stay within the bounds of the 2D space ($0 \leq x_i, y_i \leq 1000$).

The output is a 2D array with the generated positions of all agents.

\subsubsection{Training details}\label{app:more_training_details}
\textbf{Input normalization}:
In the first layer, the input $\mathbf{x} $ is normalized and scaled as follows:
\[
\mathbf{x} = \frac{\mathbf{x}}{1000} - 0.5
\]
This scales the input by $1000 $ and centers it around 0.

In the last layer, the output is scaled back:
\[
\text{output} = \sigma(\text{linear}(\mathbf{x})) \times 1000
\]
where $\sigma $ is sigmoid activation.

\textbf{Package details}:
The neural network is based on PyTorch-geometric package (version 2.7.0, \citealp{fey2019fast}).

\textbf{Decay factor}:
 A decay factor was included for the reconstruction term  ($\lambda=0.9$):
\[
\phi_{\text{epoch}} = \phi \cdot \lambda^{\text{epoch}}
\]
where $\phi$ is the initial reconstruction coefficient,$\text{epoch}$ is the current training epoch. 
This decay factor ensures that global constraints (e.g. converging to the central region of the grid) are established before fine-tuning local structural details based on the diffusion and the domain of danger terms.

\subsubsection{Loss weighting}\label{app:loss_weighting}

Each loss term is designed to be weighted per agent or grid point. For the DOD term, we take the mean over all grid points. For the diffusion term, we use the coefficient of variation (CV), normalizing variance by the mean to approximate per-agent variability. The reconstruction loss is simply the MSE between initial and optimized positions.
We add hyperparameters to scale each loss term in the integration of all terms.  
The reconstruction term is weighted by $\phi_{\text{epoch}}$ as described above, and the KLD term hyperparameter, $\delta$, is set to $1$.
Balancing $\beta$ (DOD term) and $\gamma$ (diffusion term) can be done by assessing the loss plane (\fref{fig:optimization_analysis}).

\subsubsection{Kite-shaped swarm}\label{app:kite}

We generate a kite-like spatial configuration of $N$ agents which includes a rectangular block of points at the top and a curved line below it:

\begin{itemize}
    \item \textbf{Top Block:} A rectangular grid of $\lfloor N*3 / 4 \rfloor$ points, spaced by $\alpha$.
    \item \textbf{Curved Line:} The remaining points form a curved line below the block. To create a non-linear curve we used the following formula for the $i$ agent location: 
    \[
    x_i = -i \cdot \alpha \cdot \cos(f \cdot i), \quad y_i = -i \cdot \alpha,
    \]
    where $f$ controls the curvature.
    
\end{itemize}
This structure combines order (in the block) with natural curvature (in the line) to resemble a kite.

\subsubsection{Gromov-Wasserstein distance calculation}\label{app:gw}

To evaluate the similarity between the output of our generative model and the spatial agent configurations models, we computed the Gromov-Wasserstein (GW) distance, which compares two metric spaces based on their pairwise distance matrices. Specifically, we used L2 metric as the distance metric. 
Given two sets of agent positions, $\mathbf{X}$ and $\mathbf{Y}$, we define their Euclidean distance matrices:

\begin{equation}
C_1(i, j) = d(x_i, x_j), \quad C_2(i, j) = d(y_i, y_j),
\end{equation}

which are then normalized by their maximum values. We assume a uniform probability distribution over agents:

\begin{equation}
p = \frac{1}{N} \mathbf{1}, \quad q = \frac{1}{N} \mathbf{1}.
\end{equation}

The GW distance with a squared loss function is then computed as:

\begin{equation}
GW(C_1, C_2, p, q) = \min_{\gamma} \sum_{i,j,k,l} (C_1(i,j) - C_2(k,l))^2 \gamma(i,k) \gamma(j,l).
\end{equation}

We compare the optimized agent positions $\mathbf{X}_{opt}$ to predefined formations (kite, arrow, rectangle, and V-shape), computing GW distances to quantify their structural similarity:

\begin{equation}
d_{shape} = GW(\mathbf{X}_{shape}, \mathbf{X}_{opt}).
\end{equation}

\clearpage
\quad

\section{Proofs}\label{sec:proofs}
\subsection{Preliminaries.} Prior to the proofs, we provide definitions and facts regarding our system, relevant for the following proofs. 

We are considering a swarm of $N$ agents and predators, with the matrix $C(t) \in \mathbb{R}^{N \times d}$ and the vector $\mathbf{c}_{p}(t)$ representing their location respectively over the grid $A$. 
For simplicity, we drop the temporal dependence henceforth and relate to it implicitly.

\textbf{The Independence of Predators.} Predators act independently according to the predation rules.

\textbf{The Uniform Distribution Property.}  
 A predator's location $\mathbf{c}_{p}$ is uniformly distributed over the grid $A$. For any measurable subset $E \subseteq A $, the probability that the predator is located within $E$ is  given by
\begin{align*}
\mathbb{P}(\mathbf{c}_{p} \in E) = \frac{\mu(E)}{\mu(A)} \ ,
\end{align*}
for $\mu(\cdot)$ the area measure.

\textbf{The Domain of Danger (DOD).} The domain of danger defined as
\begin{align*}
    D = \bigcup_{i=1}^{N} B(\mathbf{c}_i, \rho)\ ,
\end{align*}
where $B(\mathbf{c}_i, \rho) $ is a disk of radius $\rho $ centered at the position $\mathbf{c}_i $ of agent $i$, and $N $ is the total number of agents.

\subsection{Detectability: predation probability is proportional to the DOD's magnitude} \label{sec:detect}

\begin{theorem} \label{th:probability} 

Given a swarm with $N$ agents advancing at speed $v_{\text{prey}}$ over a grid $A$, for a predator with interaction range $\rho$ advancing at a speed $v_p$ obeying  
\begin{align*}
v_p  > v_{\text{prey}} \ ,
\end{align*}
the probability of a detection within a time interval $\tau$, is  proportional to the relative area of the DOD, $D$, within the grid $A$,   
\begin{align*}
\mathbb{P}_{\text{pred}}(t+\tau) \approx \frac{\mu(D)}{\mu(A)} \ .
\end{align*}
And $\tau $  is bounded by:
\begin{align*}
 \tau \leq \frac{\rho}{v_p - v_{\text{prey}}}.
\end{align*}
\end{theorem}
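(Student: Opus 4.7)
The plan is to decompose the claim into a snapshot identity, a temporal-window extension, and a derivation of the allowed timescale. By the independence of predators, it suffices to analyze one predator. First, I would observe that a detection at instant $t$ is, by the definitions of the predator model and the DOD, the event $\{\mathbf{c}_p(t) \in D(t)\}$: the predator detects an agent iff its distance to some agent falls below $\rho$, which is precisely the condition defining $D = \bigcup_i B(\mathbf{c}_i, \rho)$. Applying the Uniform Distribution Property directly to the measurable subset $D \subseteq A$ then yields the instantaneous identity
\[
\mathbb{P}\bigl(\mathbf{c}_p \in D\bigr) \;=\; \frac{\mu(D)}{\mu(A)},
\]
which supplies the right-hand side of the main estimate.

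Second, to upgrade from an instantaneous probability to one over the window $[t, t+\tau]$, I would argue that for $\tau$ small enough the DOD is effectively stationary, so the windowed detection event is well-captured by the snapshot. The characteristic timescale follows from the relative kinematics: the predator advances at $v_p$ and each agent at $v_{\text{prey}}$, so in the pursuit (same-direction) configuration the net closing rate on any fleeing prey is $v_p - v_{\text{prey}}$, which is positive by hypothesis. In a window of length $\tau$ the predator can therefore close at most $(v_p - v_{\text{prey}})\tau$ relative to any given agent; demanding that this net change does not exceed the detection scale $\rho$ gives
\[
\tau \;\leq\; \frac{\rho}{v_p - v_{\text{prey}}},
\]
the second claim of the theorem. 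Under this bound, the set of predator positions that lead to detection in $[t, t+\tau]$ differs from $D(t)$ only by a boundary layer of width $O(\rho)$, which is subleading, so combining with the first step yields $\mathbb{P}_{\text{pred}}(t+\tau) \approx \mu(D)/\mu(A)$.

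The main obstacle I expect is giving the $\approx$ a rigorous quantitative meaning. A strict equality fails because some predator positions near $\partial D(t)$ will cross into or out of the DOD during the window, so one really wants a two-sided sandwich of the form
\[
\frac{\mu(D(t))}{\mu(A)} - \varepsilon_-(\tau) \;\leq\; \mathbb{P}_{\text{pred}}(t+\tau) \;\leq\; \frac{\mu(D(t))}{\mu(A)} + \varepsilon_+(\tau),
\]
where both corrections are controlled by $|\partial D| \cdot (v_p + v_{\text{prey}}) \tau / \mu(A)$. Handling this requires bounding the perimeter of a union of possibly overlapping disks, which I would do with the coarse estimate $|\partial D|\leq 2\pi N \rho$, and then invoking the timescale bound to keep the correction subdominant. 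A secondary subtlety is that the uniform distribution over $A$ should be preserved (or at least carried along by the deterministic motion of the predator) throughout the window, which can be justified either by the random initial direction of the predator or by averaging over its entry point into the region.
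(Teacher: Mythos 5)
Your proposal is correct and its first half coincides exactly with the paper's argument: the paper likewise applies the Uniform Distribution Property to the measurable set $D=\bigcup_i B(\mathbf{c}_i,\rho)\subseteq A$ to get $\mathbb{P}(\mathbf{c}_p\in D)=\mu(D)/\mu(A)$, and derives the same kinematic bound from the relative speed $v_p-v_{\text{prey}}$. Where you diverge is in what the bound on $\tau$ is taken to mean. The paper's proof \emph{conditions} on the predator already lying inside the DOD, notes that there is then an agent $m$ at distance $d_m\le\rho$, and sets $\tau=d_m/(v_p-v_{\text{prey}})\le\rho/(v_p-v_{\text{prey}})$ as the time needed to close that residual gap; it does not attempt to say anything about how the detection probability evolves over the window. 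You instead read $\tau$ as the length of a window over which the snapshot identity remains valid, and justify the $\approx$ by a boundary-layer estimate controlled by $|\partial D|\le 2\pi N\rho$ and the relative displacement $(v_p+v_{\text{prey}})\tau$. Your reading is arguably more faithful to the theorem's phrasing (``probability of a detection within a time interval $\tau$''), and the sandwich bound you sketch is a genuine strengthening that the paper does not supply; the cost is that you must carry the perimeter estimate and the preservation of the uniform law through the window, neither of which the paper's shorter conditioning argument needs. Both routes land on the same two displayed formulas, so there is no gap in either direction---only a difference in which event the timescale $\rho/(v_p-v_{\text{prey}})$ is attached to.
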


\begin{proof}
The uniform distribution property implies that the probability that the predator is within the swarm's DOD,  $D \subseteq A$ is given by,
\begin{align*}
    \mathbb{P}(\mathbf{c}_{p} \in D) = \frac{\mu(D)}{\mu(A)} \ .
\end{align*}
Under the assumption that the predator is within the DOD of the swarm, we would like to obtain a bound on the time $\tau$ for a detection to occur.
The assumption implies that there exists an agent $m$ such that the distance $d_{m}$ between the predator and the agent obeys $d_{m} \leq \rho$.
 This distance bounds $\tau$, the minimal time period for a detection to occur:  
 \begin{align*}
      \tau = \frac{d_{m}}{v_p - v_{\text{prey}}} \leq  \frac{\rho}{v_p - v_{\text{prey}}},
 \end{align*}
where the bound is given by the extreme case $d_{m} \leq \rho$. 
\end{proof}

The presented \textit{Theorem} \ref{th:probability} ties between the detectability and the magnitude of the DOD of a given swarm.  
Given a large DOD, $D$, it is more likely that the predator will detect the swarm within a time interval $\tau$, since $D$ covers a large area over the grid and we assume a uniform distribution of the predator's location. 

\subsection{Durability: extinction of the swarm is correlated with the ease of diffusion in its graph} \label{sec:durability}
Following a detection event, the minimal distance the predator must travel to predate the swarm is given by:
\begin{align*}
    D_{\min}=\sum_{i=1}^{n-1} d\left(i, i+1\right) \ ,
\end{align*}
where $d\left(i, i+1\right)$ is the distance between two consecutive agents, and we assume optimal ordering--such that for agent $i$, $i+1$ is the closest neighbor. The corresponding minimal time to predate the entire swarm is then $T_{\min}  = D_{\min} / \left(v_p - v_{prey}\right)$. 

As the graph is defined on a euclidean manifold the predator's signal propagation kernel, $K$, can be approximated by a Guassian kernel (see ~\ref{sec:diff_graph}). This approximation implies that for large $D_{\min}$, a long time $\tau_{\min}$ is required for the predator to predate the entire swarm, scaling as $\tau_{\min} \sim D_{\min}$.
This shows that the diffusion ability is proportional to the minimal extinction time, both correlating with the minimal distance on the graph. 
To quantify this we conducted the following experiments--
A swarm of agents was generated randomly, on a 2D plane, in a consecutive manner; each new agent placed exactly at distance $\alpha$ from an existing agent. For each generated swarm, the diffusion loss term (see ~\ref{sec:gen-model-des}) and the mean minimal time for extinction ($T_{\min}$) were evaluated (considering each of the $N$ agents as the initial detection). We conducted this experiment with $1,000$ swarms with different randomly sampled $\alpha$ values, $\alpha \sim \mathcal{U}(5, 50)$ (\fref{fig:app_time_diff_correlation}). As expected, these values were greatly correlated (Spearman correlation $0.985$).

\begin{figure}[htb!]
\begin{center}
        \includegraphics[width=0.5\linewidth]{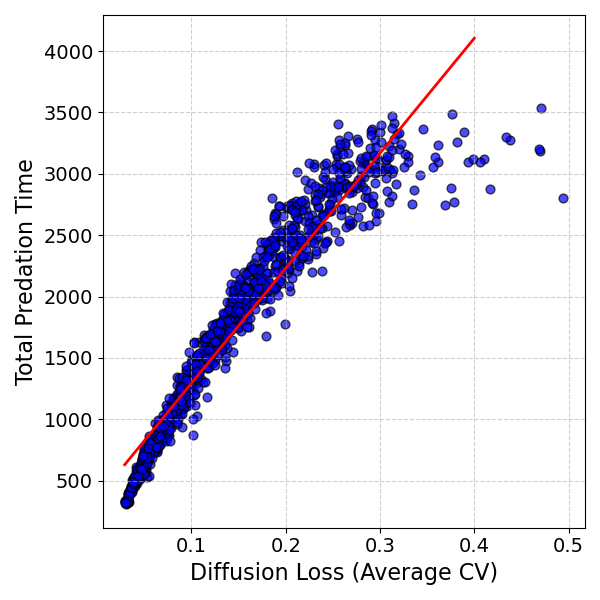}
\end{center}
  \caption{\textbf{The diffusion loss term correlates with the mean predation time}. The diffusion loss term (x-axis) and $T_{\min}$, the mean minimal time for extinction (y-axis) evaluated over $1,000$ swarm configurations. Each swarm was generated randomly, in a consecutive manner; each new agent placed exactly at distance $\alpha$ from an existing agent. The red line shows the linear fit, indicating that higher diffusion correlates with longer predation time (Spearman correlation: $0.985$).}
  \label{fig:app_time_diff_correlation}
\end{figure}

\subsection{The detectability-durability trade-off}\label{sec:tradeoff}
Sections \ref{sec:detect}, \ref{sec:durability} provide a theoretical connection between the swarm properties, detectability and durability, and the spatial configuration properties, the DOD and the ability to diffuse on the graph, and justifies the design of the \gnn~loss function (using the DOD and diffusion terms, to optimize the detectability-durability trade-off).  
Importantly, both quantities, the diffusion ability and the DOD are functions of the distances between agents within the swarm. More distant agents will induce a larger DOD magnitude, minimizing the overlap between independent agents' disks (with radius of $\rho$) and vice-versa. For the diffusion, we have derived that diffusion time is proportional to distances on the swarm.

\clearpage

\section{Supplementary figures}\label{sec:sup_figs}

\begin{figure}[htb!]
\begin{center}    
    \includegraphics[width=0.8\linewidth]{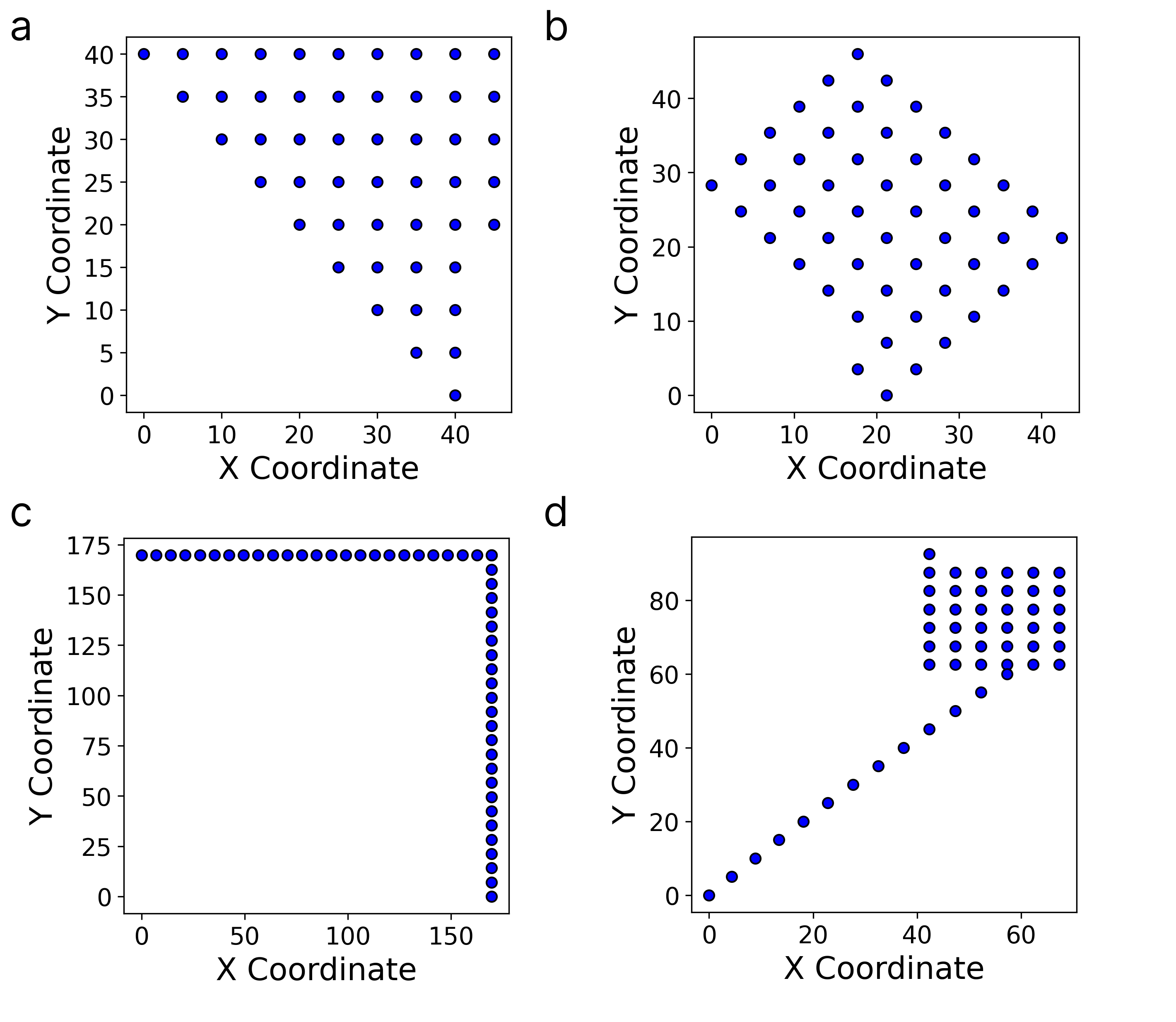}
\end{center}
  \caption{\textbf{Visualization of swarming configurations} (a) Arrow. (b) Rectangle. (c) V-formation. (d) Kite.}
  \label{fig:shapes visualization}
\end{figure}

\begin{figure}[htb!]
\begin{center}    
    \includegraphics[width=1\linewidth]{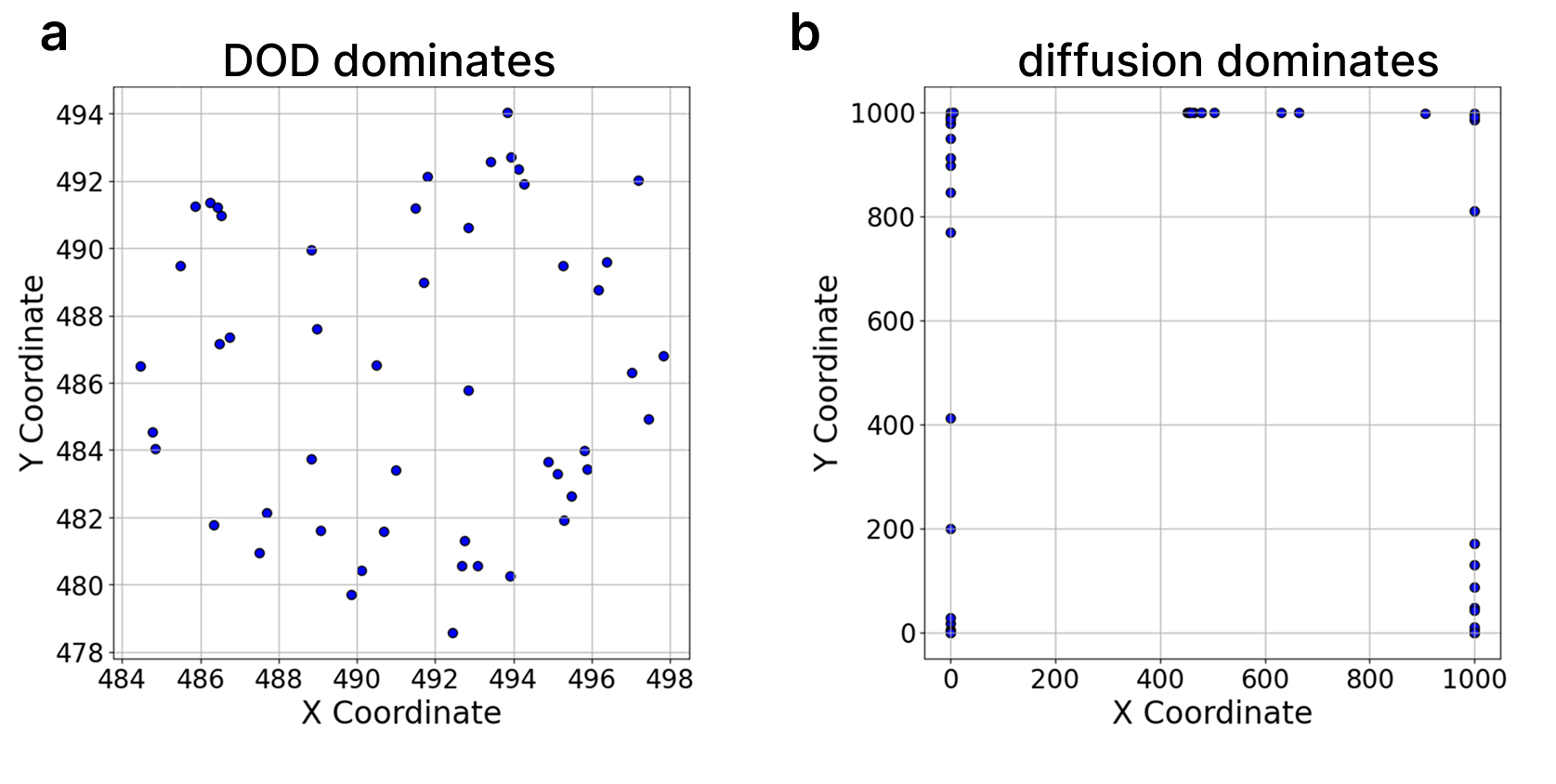}
    \end{center}
  \caption{\textbf{Value of the different loss terms in the SwaGen objective. }
  Evaluation of the Swagen model under dominance of one of the loss terms (a) domain of danger (DOD), and (b) diffusion. In (a) when the DOD dominates all agents converge to nearly the same point, minimizing the domain of danger but maximizing diffusion ability. Alternatively, in (b) When the diffusion dominates,, agents spread across the edges of the grid, minimizing diffusion ability but maximizing the domain of danger.
}
\label{fig:extreme_loss}
\end{figure}

\begin{figure}[htb!]
\begin{center}
    \includegraphics[width=1\linewidth]{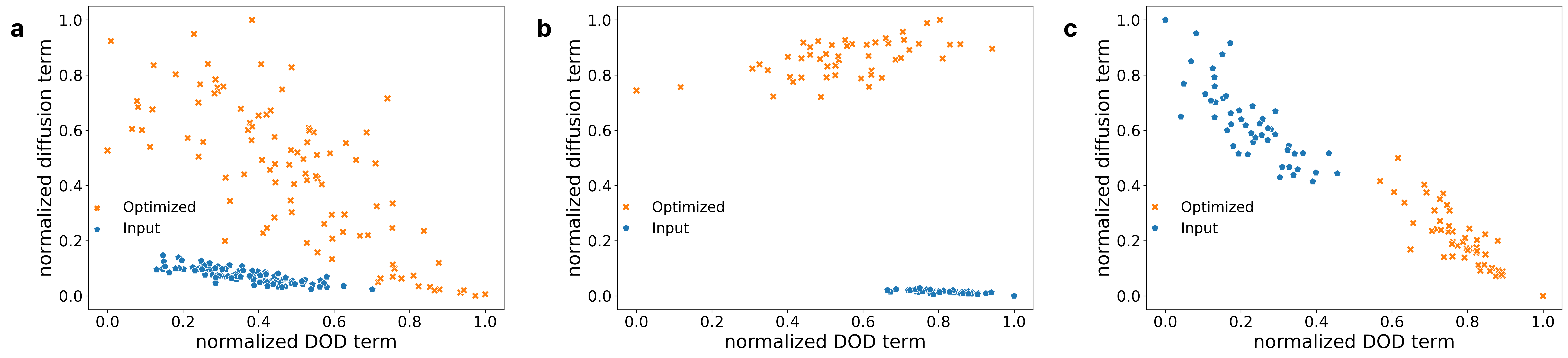}
    \end{center}
  \caption{\textbf{Trade-off visualization between random and optimized swarm configurations. }
  The normalized diffusion (y-axis) vs.normalized DOD (x-axis), both scaled to [0,1], with (1,1) as the optimal point (minimal DOD and diffusion). Blue dots represent random initialization, orange crosses indicate optimized configurations, (a) baseline, (b) diffusion loss weighted 2× the baseline, (c) DOD loss weighted 2× the baseline.
}
\label{fig:com_loss_plane}
\end{figure}

\begin{figure}[htb!]
  \begin{center}
    \includegraphics[width=0.5\linewidth]{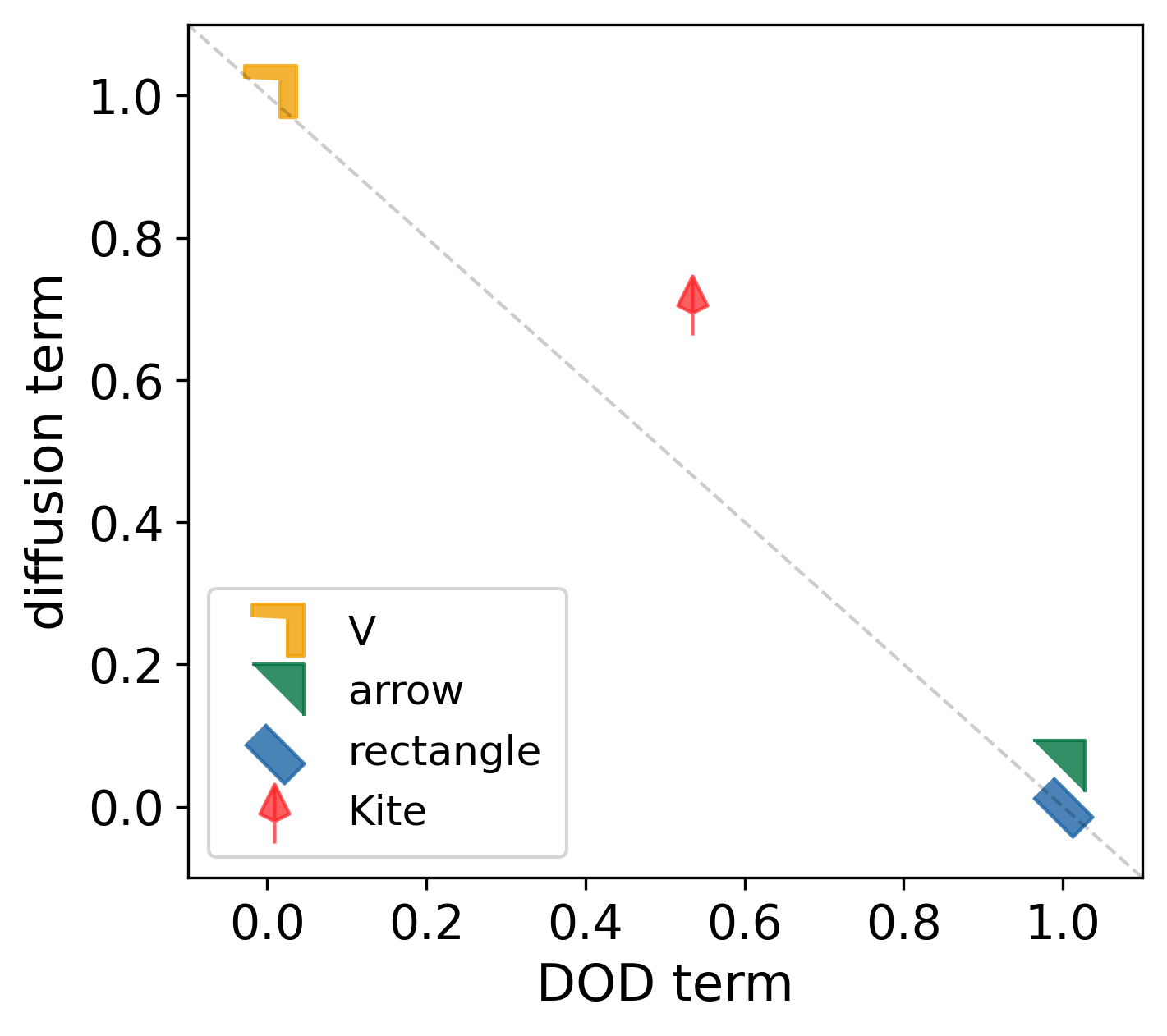}
  \end{center}
  \caption{\textbf{The "DOD-diffusion plane".}
The tension between the DOD and diffusion ability in different spatial configurations. Evaluation of the DOD (x-axis) and diffusion term (y-axis) normalized to the range of $[0,1]$. Evaluation for each spatial configuration is performed using $N=1000$ agents and $d=5$ (the distance between neighboring agents, see~\ref{app:simulation-details}).
To evaluate the diffusion we computed the mean values of the $CV$ for every possible source agent each configuration and considered heat kernels with $\tau=[50,100,150]$.}
\label{fig:scatter_shapes_loss}
\end{figure}

\begin{figure}[htb!]
  \begin{center}
    \includegraphics[width=1\linewidth]{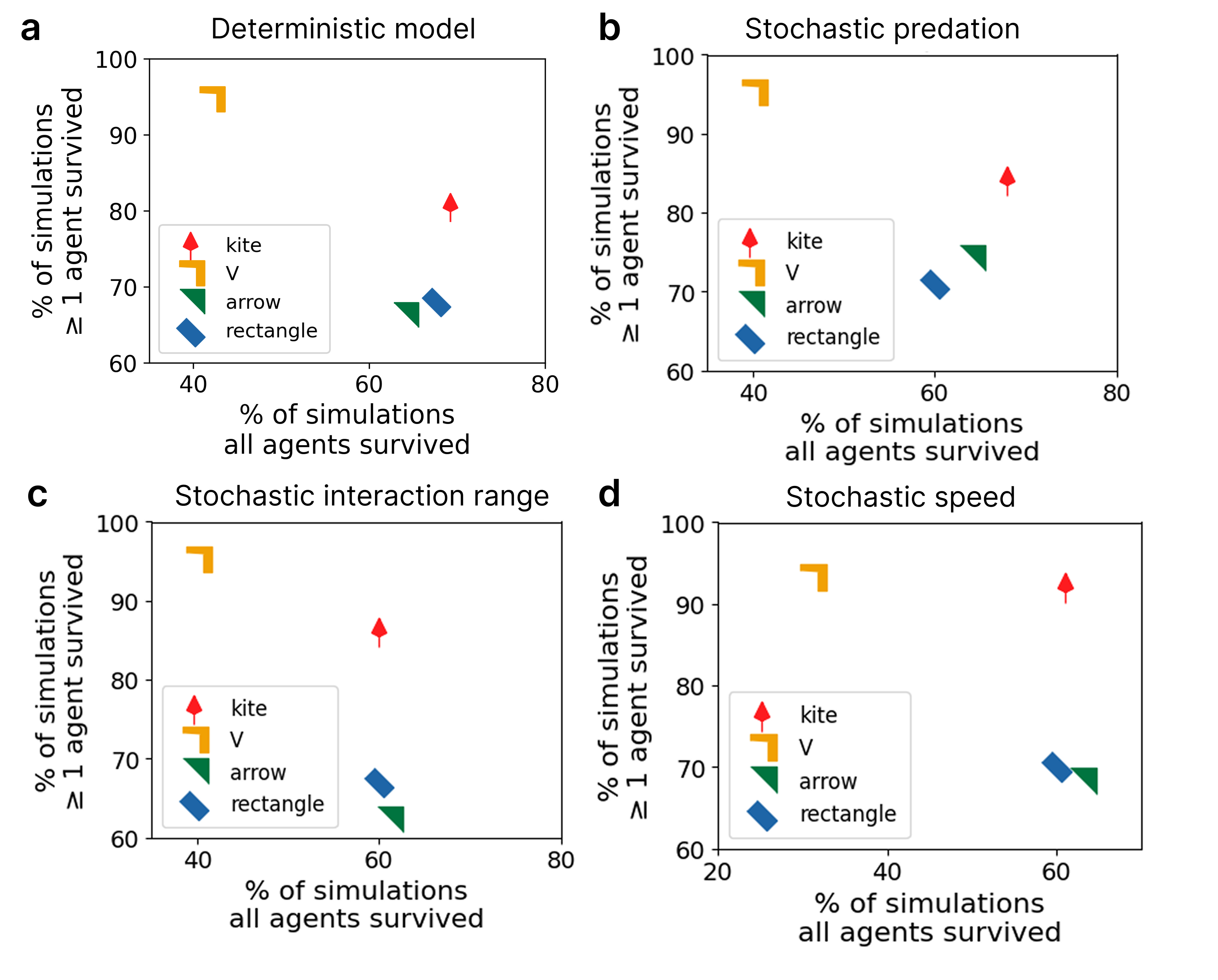}
    \end{center}
  \caption{\textbf{The swarm’s durability is maintained under stochastic conditions.}
  Evaluation of the percentage of simulations in which the swarm avoided detection (none of the agents were predated; x-axis) compared to those in which the swarm avoided extinction (at least one agent survived; y-axis) across different spatial configurations for: (a) a deterministic model, (b) under stochastic predation (when a prey is within the killing range of the predator, a predation occurs with  probability of 0.5), (c) stochastic interaction range (at each iteration the interaction range is sampled from a Gaussian distribution $\rho \sim \mathcal{N}(50, \sqrt{10})$), (d) 
  or with a stochastic speed (at each iteration the speed is sampled from a Gaussian distribution $s \sim \mathcal{N}(2, 1)$). Results are reported for 100 independent simulations per setting, the rest of the simulation parameters are identical to the parameters used for the results presented in the main text (see~\ref{sec:analytical_model}).
}
\label{fig:app_more_models}
\end{figure}


\end{document}